\pdfoutput=1
\documentclass[11pt]{article}
\usepackage{graphicx}
\usepackage[margin=1in]{geometry}
\usepackage[utf8]{inputenc} 
\usepackage[T1]{fontenc}    
\usepackage{hyperref}       
\usepackage{url}            
\usepackage{amsfonts}       
\usepackage{amsmath}
\usepackage{amsthm}
\usepackage{csquotes}
\usepackage{mhequ}
\usepackage[ruled,vlined]{algorithm2e}
\usepackage[comma,authoryear]{natbib}
\usepackage{color}
\usepackage{float}

\newtheorem{corollary}{Corollary}

\newtheorem{prop}{Proposition}
\newtheorem{remk}{Remark}

\newtheorem{proposition}{Proposition}

\begin{document}

\title{The limited-memory recursive variational Gaussian approximation (L-RVGA)  \\
}

\author{
  Marc Lambert  \\
  DGA/CATOD, Centre d'Analyse Technico-Op\'erationelle de D\'efense\\
  \&   INRIA - Ecole Normale Sup\'erieure - PSL Research university \\
  \texttt{marc.lambert@inria.fr} 
  \and
  Silvère Bonnabel \\
ISEA, Universit\'e de la Nouvelle-Cal\'edonie \\\& MINES ParisTech, PSL University, Center for robotics\\
  \texttt{silvere.bonnabel@mines-paristech.fr}  
  \and
  Francis Bach \\
  INRIA - Ecole Normale Sup\'erieure - PSL Research university  \\
  \texttt{francis.bach@inria.fr} 
}
 \date{}

\maketitle

\begin{abstract}
We consider the problem of computing a Gaussian approximation to the posterior distribution of a parameter given a large number $N$ of observations and a Gaussian prior, when the dimension of the parameter $d$ is also large. To address this problem we build on a recently introduced recursive algorithm for variational Gaussian approximation of the posterior, called recursive variational Gaussian approximation (RVGA), which is a single pass algorithm, free of parameter tuning. In this paper, we consider the case where the  parameter dimension $d$ is high, and we propose a novel version of RVGA that scales linearly in the dimension $d$  (as well as  in the number of observations $N$),  and which  only requires linear storage capacity in $d$. This is afforded by the use of a novel recursive expectation maximization (EM) algorithm applied for factor analysis introduced herein, to approximate at each step the covariance matrix of the Gaussian distribution conveying the uncertainty in the parameter. The approach is successfully illustrated on the problems of high dimensional least-squares and logistic regression, and generalized to a large class of nonlinear models. 
\end{abstract}

\section{Introduction}
\label{intro}	
In machine learning and statistics,  Bayesian inference aims at estimating the full posterior distribution of a parameter of interest $\theta$, to better track the uncertainty of the learning process. Exact Bayesian inference is generally not tractable, and approximations are required.   Variational approximation  \citep{Hinton93, VarEM} consists in  rewriting the estimation problem as an approximate optimization problem over  a restricted class of posterior distributions, such as Gaussian distributions, as advocated in the present paper.   

More precisely, assume we want to approximate the posterior distribution $p(\theta|Y_N)$ of a Bayesian parameter $\theta$ given $N$ observations $Y_N=y_1,\dots, y_N$. The variational Gaussian approximation \citep{Barber97,opper2009,Barber1} consists in minimizing the Kullback-Leibler divergence between this unknown posterior and a restricted class of parameterized distributions, namely a Gaussian  distribution $q(\theta|\mu,P) \sim \mathcal{N}(\theta|\mu,P)$, leading to the following optimization problem:
\begin{align}
\underset{\mu,P}{\min} &\quad KL\left(q(\theta|\mu,P) \| p(\theta | Y_N) \right)=\int q(\theta|\mu,P) \log \frac{q(\theta|\mu,P)}{p(\theta | Y_N)} d\theta. \label{batchKL} 
\end{align}

In the recursive variational approach, the observations at each time step consist only of the last sample~$y_t$, and the previous Gaussian estimate $q_{t-1}$ serves as a prior that sums up  information obtained so far. The posterior is then approximated by $p(\theta|y_t) \propto p(y_t|\theta)q_{t-1}(\theta)$. This leads to a recursive variational approximation problem that   writes at each step as follows:
\begin{align}
q_t(\theta) = \underset{ \mu,P} {\arg \min} \quad KL\left(\mathcal{N}(\theta|\mu,P) \| c_t p(y_t | \theta) q_{t-1}(\theta)\right),  \label{rvga}
\end{align}
where $c_t$ is a normalization constant that disappears when we consider the derivatives with respect to the variational parameters. It can be shown that the necessary stationary conditions for   \eqref{rvga} yield the following updates, which are implicit---see \citep{RVGA}:
\begin{align}
&\mu_t=\mu_{t-1}+P_{t-1}\mathbb{E}_{\theta \sim \mathcal{N}(\mu_t,P_t)}[\nabla_{\theta} \log p(y_t|\theta)]  \label{UpdateMean}\\ 
&P_t^{-1}=P_{t-1}^{-1} - \mathbb{E}_{\theta \sim \mathcal{N}(\mu_t,P_t)}[\nabla^2_{\theta} \log p(y_t|\theta)]. \label{UpdateCov}
\end{align}

We recognize a second-order online algorithm doing an adaptive gradient descent on the stochastic loss function $\ell_t(\theta)=-\log p(y_t|\theta,x_t)$. Although this algorithm seamlessly scales with the number of observations $N$, it has quadratic cost in the parameter dimension~$d$, hindering its use in high-dimensional problems. 
In this paper, we consider a factor analysis (FA) structure to approximate the precision matrix $P^{-1}$ at each time step with a \textquote{low rank + diagonal} matrix, that is,  $P^{-1} \approx WW^T+\Psi$ where $W \in \mathcal{M}(d \times p)$ is a rank~$p$ matrix with  $p \ll d$ and $\Psi  \in \mathcal{M}_d(\mathbb{R})$ is a diagonal matrix. This choice of decomposition makes it possible to manipulate matrices of size $d\times d$ while only storing $d+d\times p$ parameters in memory.

We reformulate the considered factorization in a recursive way such that the implicit equation \eqref{UpdateCov} becomes:
\begin{align}
&W_{t-1}W_{t-1}^T+\Psi_{t-1} - \mathbb{E}_{\theta \sim \mathcal{N}(\mu_t,(W_{t}W_{t}^T+\Psi_{t})^{-1})}[\nabla^2_{\theta} \log p(y_t|\theta)] \underset {\rm FA} \approx W_tW_t^T+\Psi_t.  \label{RecFA}
\end{align}
In the particular case of a linear regression model $y_t=x_t^T\theta + w_t$ with output noise $w_t  \sim \mathcal{N}(0, 1)$ and with $x_t$   the input associated to the observation $y_t$, this recursive update becomes explicit:
 \begin{align}
&W_{t-1}W_{t-1}^T+\Psi_{t-1} + x_tx_t^T \underset {\rm FA} \approx W_tW_t^T+\Psi_t.
\end{align}
We recognize a classical factor analysis problem for the scaled empirical covariance of the inputs $\sum_{t=1}^N x_tx_t^T$, but done recursively.

We show that the variational parameters $W_t$ and $\Psi_t$ can be computed through a recursive variant of the EM algorithm that is free of step tuning and provides a good approximation of the precision matrix after only one pass through the data. 

We then generalize this recursive EM approach in the nonlinear case. We have to assume additional approximations in this case: the Hessian term in \eqref{RecFA} is approximated with an explicit outer product and the expectations are approximated by sampling. To this aim we propose a novel sampling method to sample from $\mathcal{N}(\mu,(WW^T+\Psi)^{-1})$ without storing a $d \times d$ matrix, resulting in updates which scale  linearly with  $d$ in terms of both computation and storage costs. 
 
The paper is organized as follows: in Section \ref{rw} we discuss how our approach differs from several related studies on large-scale variational inference. In Section \ref{doubleVariational}, we model our problem as a two-stage  variational approximation scheme and show how the second stage can be recast as an expectation-maximization (EM) problem. Considering first the linear case, we reformulate our solution as a recursive EM algorithm with computation and storage cost being linear in $d$. In Section \ref{LRVGA}, we extend our approach to the nonlinear case using several approximations when updating the covariance matrix.  We also  introduce our novel method for sampling from a Gaussian whose dispersion is encoded by a precision matrix structured with a factor analysis model. 

In Section \ref{xp} our algorithm is evaluated  on synthetic data for large-scale matrix approximation and linear and logistic regression.
 
\section{Related work}
\label{rw}

Variational inference aims to approximate the distribution of latent parameters \citep{Hinton93, VarEM} in Bayesian machine learning, and has received a lot of attention in recent years for its ability to estimate a distribution without computing its normalization constant. Although the approximation may provide biased results, it converges much faster than the gold standard,  Markov Chain Monte Carlo or MCMC \citep{Andrieu2003}, which is eventually accurate but can be slow. Moreover, the model used in variational methods can be better specified using latent parameters \citep{Kohn16, Kohn13}.

To tackle large-scale datasets, stochastic solvers are used to compute the unknown parameters leading to general frameworks such as the black box variational inference algorithm \citep{ranganath2014}. In this article, we consider   variational Gaussian approximation (VGA) where the variational parameters boil down  to a mean vector and a covariance matrix. 
An approximation of the covariance matrix is needed to tackle high-dimensional problems in VGA. \citet{Barber1} have shown that, for a generalized linear model (GLM), the KL divergence is convex in the square root of the covariance matrix and have proposed different types of approximations for this form. Other works consider the factor analysis structure in the context of variational inference. \citet{Beal2000} used a mean field variational Bayes to obtain closed-form updates for the factor analysis parameters of each Gaussian of a mixture model. Their model is more general than ours since they use a mixture of Gaussian but all observations are required at each iteration. \citet{Ong2018} estimated the factor analysis parameters of a Gaussian using a stochastic gradient descent which allows the observation to be processed sequentially.
We also consider a factor analysis structure in this article but we do not directly optimize the parameters to avoid using stochastic gradient descent which can be difficult to tune.

\citet{SLANG} solved the variational problem in high dimension using a singular value decomposition (SVD) to compute a factor analysis structure for the precision matrix at each step of a natural gradient descent. The two approximation  schemes, Gaussian approximation, and factor analysis approximation are performed sequentially. Our contribution extends this approach and differs from it in several respects. First, we do not consider a natural gradient descent algorithm on the variational parameters but we use instead the RVGA scheme which requires no step tuning \citep{RVGA}. Second, we propose a parameter-free recursive EM algorithm to compute the factor analysis precision matrix approximation without explicitly using an SVD. SVD may not be efficient in high dimensions and the decoupled  approach proposed in \citet{SLANG} does not allow approximating accurately the correlation between the low-rank part and the diagonal part of the factor analysis structure. Moreover, we show that our approach has a nice interpretation as a two-stage variational inference that performs two \textquote{I-projections} \citep{Csiszar2004} at each step: the first one on the space of Gaussian distributions, the second one on the space of Gaussian distributions with a structured precision matrix constrained to be \textquote{diagonal + low-rank}.

Beyond variational inference, our work is also related to second-order stochastic optimization where the Hessian is approximated online to fit in the memory like in Adagrad \citep{Adagrad} or TONGA~\citet{Topmoumoute}. Interesting connections may be done also with large-scale Kalman filtering like the SEEK filter \citep{SEEK} which approximates the covariance matrix with a singular value decomposition or the ensemble filter \citep{Evensen94} which approximates it with sampling. 

\section{Two-stage variational approximation}
\label{doubleVariational}
 
To introduce a variational loss for factor analysis, we recast the original recursive problem  \eqref{rvga} as a two-stage variational problem as follows: 
\begin{align}
& \textbf{Two-stage variational approximation} \nonumber\\
 &q_t^*(\theta)= \mathcal{N}(\theta|\mu_t^*,P_t^*)=\underset{ \mu, P} {\arg \min} \quad KL( \mathcal{N}(\theta|\mu,P) \| c_t p(y_t|\theta)q_{t-1}(\theta)) \label{update1} \\
&q_t(\theta)= \mathcal{N}(\theta|\mu^*_t,\tilde{P}_t)=\underset{ W, \Psi} {\arg \min} \quad KL( \mathcal{N}(\theta|\mu^*_t,(WW^T+\Psi)^{-1}) \| q_t^*(\theta)). \label{update2}
  \end{align}

The first variational problem is the projection of the posterior distribution onto the space of Gaussian distributions whereas the second variational problem is the projection of a Gaussian onto the space of structured Gaussian having a low rank plus diagonal structure for the precision matrix. This factorization aims to exploit the parsimony of the covariance matrix to limit the memory requirement. 

The first projection was the object of our previous work    \citep{RVGA} and leads to the RVGA update giving the optimal solution for the mean \eqref{UpdateMean} and covariance \eqref{UpdateCov}, whereas the second one can be addressed through an expectation-maximization algorithm as we show now.

\subsection{Low rank + diagonal approximation via EM}\label{LREM}

We first show that the divergence in  \eqref{update2} may be algebraically related to a maximum likelihood problem. Indeed, since the means are the  same on both sides of the KL divergence, we may ignore them and rewrite the update as follows:
\begin{align}
&KL(\mathcal{N}(\theta|\mu_t^*,(WW^T+\Psi)^{-1})  \| \mathcal{N}(\theta|\mu_t^*,P_t^*)) \label{PbKL}\\
=\ &KL(\mathcal{N}(\theta|0,{P_t^*}^{-1}) \|  \mathcal{N}(\theta|0,WW^T+\Psi)) \label{leftKL} \\
=\ &\frac{1}{2} \mathrm{Tr} ((WW^T+\Psi)^{-1}{P_t^*}^{-1}) +\frac{1}{2} \log \det( WW^T+\Psi) -\frac{1}{2}  \log \det ({P^*}^{-1})-\frac{d}{2}. \label{leftKL2}
\end{align}

Let us consider $K$ samples $v_1,\dots, v_K$ supposed to be centered and such that their empirical covariance is $S_K=\frac{1}{K}\sum_{i=1}^Kv_iv_i^T={P_t^*}^{-1}$.  The log-likelihood on these samples is: 
\begin{align}
&\underset{W,\Psi}{\max}  \quad \log \mathcal{N}(v_1,\dots, v_K|0,WW^T+\Psi)\\
=\ &\underset{W,\Psi}{\max}  \quad - \frac{K}{2} \mathrm{Tr} ((WW^T+\Psi)^{-1} \frac{1}{K} \sum_{i=1}^Kv_iv_i^T) -\frac{K}{2}   \log \det (WW^T+\Psi)-\frac{K}{2}d\log(2\pi) \label{loglik1},
\end{align}
where we have used the relation $\sum_{i=1}^K v_i^T(WW^T+\Psi)^{-1}v_i=\mathrm{Tr}((WW^T+\Psi)^{-1} \sum_{i=1}^Kv_iv_i^T)$. We see that minimizing the divergence \eqref{leftKL2} is equivalent to maximizing the  log-likelihood \eqref{loglik1}.

The variational parameters can be obtained by zeroing the derivative of the maximum likelihood (ML) and using a singular value decomposition to find the solution. We will rather consider an expectation-maximization (EM) algorithm  \citep{Dempster77} which better scales to high-dimensional problems and which is guaranteed to increase the (total) likelihood at each step \citep[see][]{EMFA}. It can be shown that the EM approach computes implicitly a singular value decomposition to find the latent factors but in a memory-efficient way. The connection between the fixed-point ML, the fixed-point EM approach, and the associated eigenvalue algorithms is discussed in more detail in Appendix \ref{FixedPoint}. 

To apply the EM algorithm we need to introduce latent variables $z_1,\dots, z_K$ such that our samples take the form $v_i=Wz_i + \varepsilon_i$ where $\varepsilon_i \sim \mathcal{N}(0,\Psi)$ and $i=1,\dots, K$.

At the expectation step (E-step), the latent variables $z_i$ are estimated conditionally on the observation and the current variational parameters $W$ and $\Psi$ using the conditioning formula:
 \begin{align}
&p(z_i|v_i,W,\Psi)=\mathcal{N}(M^{-1}W^T\Psi^{-1}v_i,M^{-1}) \text{ where }   M=\mathbb{I}_p+W^T\Psi^{-1}W.
\end{align} 

At the maximization step (M-step), the variational parameters are adjusted to maximize the expected total likelihood defined by:
\begin{align}
&\mathbb{E}[\log p(v_1,z_1,\dots,v_K,z_K|W,\Psi)]=\mathbb{E}[\sum_{i=1}^K \log p(v_i|z_i,W,\Psi) + \sum_{i=1}^K \log p(z_i)] \\
&=\mathbb{E}[-\frac{K}{2} (v_i-Wz_i)^T\Psi^{-1}(v_i-Wz_i)-\frac{K}{2}\log \det \Psi  -\frac{K}{2}  \log  (2 \pi) + \sum_{i=1}^K \log p(z_i)],  \label{MLE-EMbatch}
\end{align}
where the expectations are taken over the conditional distribution $z_i \sim p(z_i|v_i,W,\Psi)$, where parameters $W,\Psi$ are there fixed to their current value.
After some calculations recapped in Appendix \ref{AppendixFixedPointEM}, the EM algorithm  for the factor analysis problem yields the following updates:
\begin{align}
\intertext{ \textbf{E-Step:}}
& \mathbb{E}[z_i|v_i]=M^{-1}W^T\Psi^{-1}v_i \nonumber \\
& \mathbb{E}[z_iz_i^T|v_i]=M^{-1}+\mathbb{E}[z_i|v_i]\mathbb{E}[z_i|v_i]^T.  \label{E-step}
\intertext{ \textbf{M-Step:}}
&W^{(n)}= \sum_{i=1}^K v_i \mathbb{E}[z_i|v_i]^T ( \sum_{i=1}^K \mathbb{E}[z_iz_i^T|v_i])^{-1} \nonumber\\
&\Psi^{(n)}={\rm{diag}} \Big(\frac{1}{K}\sum_{i=1}^K v_iv_i^T-W^{(n)}\frac{1}{K} \sum_{t=1}^K \mathbb{E}[z_i|v_i]v_i^T\Big), \label{M-step}
\end{align}
where the $^{(n)}$ stands for ``new''.

The expectation and maximization steps can be computed through a single update leading to the following fixed-point scheme (see Appendix \ref{AppendixFixedPointEM} for further details):
\begin{align}
W^{(n)}&=S_K\Psi^{-1}W (\mathbb{I}_p+M^{-1}W^T\Psi^{-1}S_K\Psi^{-1}W)^{-1},  \\
&\quad \text{where } M=\mathbb{I}_p+W^T\Psi^{-1}W  \nonumber \\
\Psi^{(n)}&={\rm{diag}} (S_K-W^{(n)}M^{-1}W^T\Psi^{-1} S_K) \label{FP-EM}\\
W&=W^{(n)},\quad \Psi=\Psi^{(n)}. \nonumber
\end{align}
This update no longer  depends on the samples $v_i$  nor on the latent parameters $z_i$ explicitly. It only makes use of $S_K=\frac{1}{K}\sum_{i=1}^Kv_iv_i^T={P_t^*}^{-1}$. The matrix ${P_t^*}^{-1}$ is the output of the R-VGA update  \eqref{UpdateCov} and depends on the previous estimate ${P_{t-1}}^{-1}$, supposed already in a factor analysis form, such that we have the relation:
\begin{align}
&{P_t^*}^{-1}=W_{t-1}W_{t-1}^T+\Psi_{t-1} - \mathbb{E}_{\theta \sim \mathcal{N}(\mu_t,P_t)}[\nabla^2_{\theta} \log p(y_t|\theta)].
\end{align}

Substituting this expression into our fixed point EM algorithm, we obtain a closed form for the second-stage variational update. But this update involves the $d \times d$ matrix $\nabla^2_{\theta} \log p(y_t|\theta)$, which does not hold in memory in large-scale problems. Moreover, this matrix is not well defined since it depends on an expectation under unknown parameters $\mathbb{E}_{\theta \sim \mathcal{N}(\mu_t,P_t)}$. 

We will first consider a linear model where the Hessian matrix and expectations disappear and updates reduce to simple recursive equations leading to Algorithm \ref{algorithmRecEM} which is memory-efficient.  We will then show in Section \ref{LRVGA} that this same algorithm can be used in the nonlinear case if we approximate the Hessian matrix and the expectations in a suitable way.

\subsection{The linear case: a recursive EM algorithm for factor analysis}\label{FA}
Let us consider for now the simpler case of a linear model $y_t=x_t^T\theta + w_t$, where $w_t  \sim \mathcal{N}(0, \sigma_w^2)$. The variance $\sigma_w^2$ is a scaling factor we will set equal to $1$. The RVGA updates \eqref{UpdateMean}-\eqref{UpdateCov}  is then equivalent to the following explicit updates, see \citealt{RVGA}:
\begin{align}
&\mu_t=\mu_{t-1}+P_{t}x_t(y_t-x_t^T\mu_{t-1})\label{mu2}\\
&P_t^{-1}=P_{t-1}^{-1} +x_tx_t^T. \label{CovInf}
\end{align}
 If we apply our two-stage variational approximation in this linear case, the first stage approximation \eqref{update1}  is exactly solved by the update above and the second stage approximation \eqref{update2} addresses limited-memory requirements. 
Using the factor analysis approximation in a recursive way, we obtain: 
\begin{align}
&\mu_t=\mu_{t-1}+(W_{t}W_{t}^T+\Psi_{t})^{-1}x_t(y_t-x_t^T\mu_{t-1}),\label{mu3} \\
&W_{t}W_{t}^T+\Psi_{t}  \underset{FA}{\approx} W_{t-1}W_{t-1}^T+\Psi_{t-1} + x_tx_t^T .  \label{LinearFA}
\end{align}
Using the Woodbury formula $(W_{t}W_{t}^T+\Psi_{t})^{-1}x_t=\Psi_t^{-1} (x_t- W_tM_t^{-1}(W_t^T \Psi_t^{-1}x_t))$ with $M_t=\mathbb{I}_p+W_t^T\Psi_t^{-1}W_t$,   \eqref{mu3}    can be rewritten in a limited-memory fashion involving only operations  linear in $d$:
\begin{align}
\eqref{mu3}\Leftrightarrow\mu_t=\mu_{t-1}+\Psi_t^{-1} (x_t- W_t(\mathbb{I}_p+W_t^T\Psi_t^{-1}W_t)^{-1}(W_t^T \Psi_t^{-1}x_t))(y_t-x_t^T\mu_{t-1})\label{mu22}.
\end{align} 

To address  \eqref{LinearFA},  we can use our fixed-point equation replacing the matrix $S_K$ in  \eqref{FP-EM}  by $W_{t-1}W_{t-1}^T+\Psi_{t-1}+x_tx_t^T$. This defines a recursive EM algorithm, namely Algorithm \ref{algorithmRecEM}, which consists in successively performing a few cycles on  the fixed-point equation \eqref{FP-EM}, where we expand and rearrange the terms to avoid any operations  requiring storing or multiplying $d\times d$ matrices. The Algorithm \ref{algorithmRecEM} is given in a more general setting where we solve a factor analysis approximation of the form:
\begin{align}
W_tW_t^T+\Psi_t \underset{FA}{\approx} \alpha_t (W_{t-1}W_{t-1}^T+\Psi_{t-1})+\beta_t x_tx_t^T, \label{NotNormalizedRecFA2}
\end{align} 
with $\alpha_t,\beta_t$ scalar constants being equal to 1 in this section. We will show in the next Section that we can use this same algorithm in the nonlinear case (or for the factorization of a covariance matrix) by using different values for the parameters $\alpha_t,\beta_t$ and replacing the entries $x_t$ by rectangular matrixes. 
 
Using the recursive EM in this way gives a factor analysis decomposition of the precision matrix ${P_t^*}^{-1}$, which corresponds to an increasing amount of information $P_0^{-1}+\sum_{i=1}^t x_ix_i^T$ as more inputs are processed, where $P_0$ corresponds to the covariance of the prior. 

This prior matrix $P_0$ can not be stored in memory and the algorithm \ref{algorithmRecEM} contains an initialization procedure which computes $W_0$ and $\Psi_0$ such that $W_{0}W_{0}^T+\Psi_{0} \approx P_0^{-1}$, see Appendix \ref{initPrior} for more details.

\begin{algorithm}[!ht]
\label{algorithmRecEM}
\SetAlgoLined
\KwResult{$W,\Psi$ }
Given $N$ inputs $x_1,\dots,x_N$ in high dimension $d$\; 
Given a latent dimension $p\ll d$\;
Given a prior covariance $P_0$\;

\textbf{-Initialization (consistent with $P_0$, see  \ref{initPrior})-}\\
Initialize $W \in \mathcal{M}(d \times p)$  and $\Psi  \in \mathcal{M}_d(\mathbb{R})$ diagonal:\\
$\Psi=\Psi_0$ \;
$W=W_0$\;
 \textbf{-Update-}\\
 \For{$t\leftarrow 1$ \KwTo $N$}{ 
  Access an input $x_t$\;
  \For{$k \leftarrow 1$ \KwTo $nbInnerLoop$}{ 
 $M=\mathbb{I}_p+W^T\Psi^{-1}W$\;
 $V= \beta_t x_t(x_t^T \Psi^{-1}W) + \alpha_t (W_{t-1}(W_{t-1}^T\Psi^{-1}W)+\Psi_{t-1}\Psi^{-1}W)$ \;
 $W^{(n)}=V(\mathbb{I}_p+M^{-1}W^T\Psi^{-1}V)^{-1}$\;
 $\Psi^{(n)}= \beta_t x_t*x_t+ \alpha_t  (W_{t-1}*W_{t-1} + \Psi_{t-1}) - W^{(n)}M^{-1}* V$ \;
 $W=W^{(n)}$ \;
 $\Psi=\Psi^{(n)}$ \;
 }
 $W_{t-1}=W$\;
 $\Psi_{t-1}=\Psi$\;
 }
 \caption{Recursive expectation-maximization algorithm for solving  factor analysis approximation \eqref{NotNormalizedRecFA2}} 
\tcc{From $1$ to $3$ inner loops ($nbInnerLoop$) may be sufficient to make the algorithm converge. The initialization derived in Section \ref{initPrior} compute $W_0$ and $\Psi_0$ such that $W_{0}W_{0}^T+\Psi_{0} \approx P_0^{-1}$. 
The operation \textquote{$*$}  which appears at the last line aims to compute  $X*Y={\rm{diag}}(XY^T)$  in a memory-efficient way. It is applied on two matrices $X$ and $Y$ of the same size $d \times p$. If $p=1$, this operator matches the component-wise operator $\odot$ : $X*Y=x \odot y$. If $p>1$, it writes $X*Y = \sum_{i=1}^p  x[.,i] \odot y [.,i]$.  
The diagonal matrix $\Psi$ is stored and used as a vector:   any multiplication of the diagonal matrix with a vector may be computed faster as an element-wise product: $\Psi^{-1}W=W/\psi$  where  $\psi=\rm{diag}(\Psi)$ and $/\psi$ operates on columns of $W$ and $W^T\Psi^{-1}$ gives as well  $W^T/\psi^T$ where $/\psi^T$ operates on the rows of $W^T$. }

\end{algorithm}

\section{The limited memory recursive variational Gaussian approximation (L-RVGA)}
\label{LRVGA}
In this Section, we consider the extension of our algorithm when the observations depend nonlinearly on the latent parameter $\theta$. We first extend the previous linear  case to  generalized linear models in Section \ref{LRVGA glm}. Section \ref{LRVGA general} considers the wholly general case, which necessitates additional approximations.
\subsection{L-RVGA  for generalized linear models}
\label{LRVGA glm}
The latter approach may be extended whenever the observation $y_t$ follows an exponential family distribution, which may be advantageous in the context of classification problems. An exponential family distribution \citep{Pitcher1979} takes the form: 
\begin{align}
&p(y_t)=c(y_t)\exp({\eta}^Ty_t-F( {\eta})),\label{exp:mo}
\end{align} 
where $F$ is the log partition function and $c(y)$ is a normalization function. It can represent a large family of distributions like the Gaussian, multinomial, or Dirichlet distribution. The natural parameter $\eta$ is related to the expectation parameter $m=\mathbb{E}[y]$ through the link function $g$ such that $\eta=g(m)$. In the machine learning framework, the natural parameter is also related to the input $x$ and the hidden latent parameter $\theta$ through a linear function $\eta=\theta^Tx$, this model is called the generalized linear model with a canonical natural parameter. The advantage of this model is that the Hessian term takes the form of a (generalized) outer product $-\nabla^2_{\theta} \log p(y_t|\theta)=x_t \nabla_\eta^2 F(\eta(\theta)) x_t^T$. In this case the RVGA updates  \eqref{UpdateMean}-\eqref{UpdateCov} become:
\begin{align}
&P_t^{-1}=P_{t-1}^{-1} + \gamma_t x_tx_t^T \label{rvgaGLM1} \\
&\mu_t=\mu_{t-1}+\xi_tP_{t-1}x_t ,\label{rvgaGLM2}
\end{align}
where we have introduced the scalar weighting parameter $\gamma_t$ and the scalar error $\xi_t$ which depend on $\theta$ as follows
\begin{align}
&\gamma_t=\mathbb{E}_{\theta \sim \mathcal{N}(\mu_{t},P_{t})}[\mathrm{Cov}(y_t|\theta)], \\
&\xi_t=y_t- \mathbb{E}_{\theta \sim \mathcal{N}(\mu_{t},P_{t})}[m(y|\theta)], 
\end{align}
and where we have used the properties of exponential family $\nabla_\eta F(\eta(\theta))=m(y|\theta)$ and $\nabla_\eta^2 F(\eta(\theta))=\mathrm{Cov}(y|\theta)$,  see  \cite{RVGA}. The factorized approximation of this update  can be performed through Algorithm \ref{algorithmRecEM}, since the factor analysis approximation $W_{t-1}W_{t-1}^T+\Psi_{t-1}  + \gamma_t x_tx_t^T  \underset{\rm FA}{\approx} W_{t}W_{t}^T+\Psi_{t}$ of \eqref{rvgaGLM1} fits into the problem \eqref{NotNormalizedRecFA2}, letting $\alpha_t=1,\beta_t=\gamma_t$. 
The computation of the scalar terms $\gamma_t$ and $\xi_t$ is nontrivial and requires resorting to approximations, though. Those approximations may rely on the tools developed in the next subsection, devoted to the general case. However, in the case of logistic regression with $m(y|\theta)=\frac{1}{1+\exp(-\theta^Tx)}$ and $\mathrm{Cov}(y|\theta)=m(y|\theta)(1-m(y|\theta))$ these parameters were shown in our prior work on RVGA \cite{RVGA} to be easily obtainable numerically and will serve as a baseline for our experimentation in Section~\ref{LogReg}.

\subsection{Limited memory RVGA (L-RVGA): nonlinear model}
\label{LRVGA general}
As before, we suppose that the observation $y_t$ follows the exponential family distribution \eqref{exp:mo} but with a nonlinear dependence of the form $\eta=h(\theta,x)$ with $h$ is a nonlinear function. In our setting, we need to approximate the matrix $P_t^{-1}$ recursively, as follows
 \begin{align}
P_t^{-1}&= W_{t-1}W_{t-1}^T+\Psi_{t-1} - \mathbb{E}_{\theta \sim \mathcal{N}(\mu_{t},P_{t})}[\nabla^2_{\theta} \log p(y_t|\theta,x_t)] \label{upp;eq2} \\
&\approx  W_{t-1}W_{t-1}^T+\Psi_{t-1} + X_tX_t^T \label{upp;eq3} \\
&\underset{\rm FA}{\approx} W_{t}W_{t}^T+\Psi_{t}, \nonumber
\end{align} 
where $X_t$ is a rectangular matrix  supposed to fit into memory which will be defined more precisely  in Section \ref{OuterApprox}. Using this approximation, we can run the recursive EM Algorithm \ref{algorithmRecEM}  in a memory-efficient way, replacing the input $x_t$ by the matrix $X_t$.  The mean is computed as before:
 \begin{align}
 &\mu_t=\mu_{t-1}-(W_{t-1}W_{t-1}^T-\Psi_{t-1})^{-1}\mathbb{E}_{\theta \sim \mathcal{N}(\mu_{t},P_{t})}[\nabla_{\theta} \log p(y_t|\theta,x_t)], \label{upp;eq} 
\end{align}
where we can use  the Woodbury formula  again to compute efficiently the weighting term $(W_{t-1}W_{t-1}^T-\Psi_{t-1})^{-1}$.

A practical implementation of these updates raises the following difficulties:
\begin{enumerate}
\item Computing $P_t^{-1}$ and $\mu_t$ in \eqref{upp;eq2}-\eqref{upp;eq} involves computing an expectation over a distribution parameterized by $\mu_t$ and $P_t$, i.e., the scheme is implicit. 
\item  As no closed form is generally available for the computation of the expectations $\mathbb{E}_{\theta \sim \mathcal{N}(\mu,P)}$, one may resort to Monte-Carlo sampling. However, in this paper we maintain a limited memory approximation of the \emph{inverse} of the covariance matrix, that is, $P^{-1}=WW^T+\Psi$ and we need to sample from a Gaussian $\mathcal{N}(\mu,P)$ without storing  or inverting  a $d\times d$ matrix. 
\item The Hessian matrix $\nabla^2_{\theta} \log p(y_t|\theta,x_t)$ (or its averaged value) must be computed and stored with linear cost in the dimension $d$. 
 
\end{enumerate} 
In the remainder of the section, we address all these points. 

\subsubsection{Using extra-gradients for the implicit scheme}\label{extrasec}
The more direct way to address Point $1$ above is to open the loop and to replace the expectations  under  $\mathcal{N}(\mu_{t},P_{t})$ with expectations under $\mathcal{N}(\mu_{t-1},P_{t-1})$. However, experiments we conducted showed that this naive scheme can lead to instability. The importance of managing the implicit scheme was the object of our previous work  \citep{RVGA}, where we developed closed-form formulas to solve the implicit scheme in the linear and logistic regression case. In the general case, one may resort to extra-gradient, i.e., to update first the covariance and then the mean, and to iterate twice: 
\begin{equation}
\label{mirrorProx} 
 \begin{aligned}
 &\textbf{Iterated RVGA}\\
&\color{red} \mathbf{\hat{P}_{t}} \color{black}^{-1}=P_{t-1}^{-1} - \mathbb{E}_{\theta \sim \mathcal{N}(\mu_{t-1},P_{t-1})}[\nabla^2_{\theta} \log p(y_t|\theta)]\\
&\color{red}  \mathbf{\hat{\mu}_{t}} \color{black}=\mu_{t-1}+\color{red} \mathbf{\hat{P}_{t}} \color{black}\mathbb{E}_{\theta \sim \mathcal{N}(\mu_{t-1},P_{t-1})}[\nabla_{\theta} \log p(y_t|\theta)]\\
&\color{blue} \mathbf{P_{t}} \color{black}^{-1}=P_{t-1}^{-1} - \mathbb{E}_{\theta \sim \mathcal{N}(\color{red}  \mathbf{\hat{\mu}_t} \color{black},\color{red} \mathbf{\hat{P}_{t}} \color{black})}[\nabla^2_{\theta} \log p(y_t|\theta)]  \\
&\mu_{t}=\mu_{t-1}=\color{blue} \mathbf{P_{t}} \color{black} \mathbb{E}_{\theta \sim \mathcal{N}(\color{red}  \mathbf{\hat{\mu}_t} \color{black},\color{red} \mathbf{\hat{P}_{t}} \color{black})}[\nabla_{\theta} \log p(y_t|\theta)].
\end{aligned}
\end{equation}

It turns out that this iterated scheme is equivalent to the ``Mirror Prox'' algorithm \citep{MirrorProx}, a.k.a., extra-gradient, with a unit step size   and applied to the function $f(\mu,P)=\mathbb{E}_{\theta \sim \mathcal{N}(\mu,P)}[\log p(y|\theta)]$. Mirror Prox is known to help convergence on a large set of problems (convex optimization, variational inequalities) and we have experimentally observed that this iterated scheme reduces significantly the bias of our estimator (see Appendix \ref{Extragrad} for further details). 

However, when we combine extra-gradient with factor analysis,  the extra covariance update can make the Mirror Prox scheme unstable. We have observed it is then  preferable   to skip the extra covariance update, that is, the third line of the iterated scheme above (see Appendix \ref{Extragrad}). This choice has been made in Section \ref{NonLin} dedicated to experiments in the general case. 

\subsubsection{Gaussian sampling from the precision matrix}\label{samplesec}
To address Point $2$ above, we need to sample efficiently from $\mathbb{E}_{\theta \sim \mathcal{N}(\mu,(WW^T+\Psi)^{-1})}$ without storing a $d \times d$ matrix. This problem was already addressed by \citet{SLANG} and \citet{Sampling2014} using a square-root form and two Cholesky decompositions on the latent space. However, we propose here a faster method inspired by the ensemble Kalman filter \citep{Evensen94} which does not require computing any Cholesky decomposition. Our method is close to the one developed in \cite{Orieux12} for a sparse precision matrix but is more suitable for factor analysis.

\begin{prop}
Let us define the quantities:
  $M=\mathbb{I}_p+W^T\Psi^{-1}W$ 
and  $L=\Psi^{-1}WM^{-1}$. Draw $x\sim\mathcal N(0,\Psi^{-1})$ and   $\epsilon \sim\mathcal N(0,\mathbb{I}_p)$ independently, and define 
$$x^+=x+L(\epsilon-LW^Tx)=(\mathbb{I}_d-LW^T)x+L\epsilon.$$
We have then $x^+\sim\mathcal N(0,P)$, with $$
P=(WW^T + \Psi)^{-1} =\Psi^{-1}-\Psi^{-1}W(\mathbb{I}_p+W^T\Psi^{-1}W)^{-1}W^T\Psi^{-1}.$$ 
\end{prop}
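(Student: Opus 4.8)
The plan is to use the fact that $x^+$ is an affine image of the independent Gaussian pair $(x,\epsilon)$, so that it is automatically Gaussian with mean $\mathbb{E}[x^+]=0$; the whole statement then reduces to computing a covariance and matching it with $P$. First I would write $x^+=(\mathbb{I}_d-LW^T)x+L\epsilon$ and use independence together with $\mathrm{Cov}(x)=\Psi^{-1}$ and $\mathrm{Cov}(\epsilon)=\mathbb{I}_p$ to get
\begin{equation*}
\mathrm{Cov}(x^+)=(\mathbb{I}_d-LW^T)\Psi^{-1}(\mathbb{I}_d-LW^T)^T+LL^T=\Psi^{-1}-\Psi^{-1}WL^T-LW^T\Psi^{-1}+L(W^T\Psi^{-1}W)L^T+LL^T.
\end{equation*}
The target is to show this equals $\Psi^{-1}-\Psi^{-1}WM^{-1}W^T\Psi^{-1}$ with $M=\mathbb{I}_p+W^T\Psi^{-1}W$, which is the second (Woodbury) expression already asserted for $P$ in the statement.

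Next I would substitute $L=\Psi^{-1}WM^{-1}$ everywhere and group the terms, writing $B:=W^T\Psi^{-1}W$ so that $M=\mathbb{I}_p+B$ and $M=M^T$. The two cross terms $\Psi^{-1}WL^T$ and $LW^T\Psi^{-1}$ are then both equal to $\Psi^{-1}WM^{-1}W^T\Psi^{-1}$ (using $M^{-T}=M^{-1}$), and the two remaining terms combine into $\Psi^{-1}W\bigl(M^{-1}BM^{-1}+M^{-2}\bigr)W^T\Psi^{-1}$. The one step that is not pure bookkeeping is the cancellation
\begin{equation*}
M^{-1}BM^{-1}+M^{-2}=M^{-1}(B+\mathbb{I}_p)M^{-1}=M^{-1}MM^{-1}=M^{-1},
\end{equation*}
which uses precisely $M=\mathbb{I}_p+B$. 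With it, $\mathrm{Cov}(x^+)=\Psi^{-1}-2\,\Psi^{-1}WM^{-1}W^T\Psi^{-1}+\Psi^{-1}WM^{-1}W^T\Psi^{-1}=\Psi^{-1}-\Psi^{-1}WM^{-1}W^T\Psi^{-1}$.

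Finally I would invoke the Sherman--Morrison--Woodbury identity (already used earlier in the paper) to identify $\Psi^{-1}-\Psi^{-1}W(\mathbb{I}_p+W^T\Psi^{-1}W)^{-1}W^T\Psi^{-1}$ with $(WW^T+\Psi)^{-1}$, i.e.\ with the claimed $P$, and conclude $x^+\sim\mathcal N(0,P)$. I do not expect a genuine obstacle: the argument is a short linear-algebra computation whose only ``trick'' is spotting the collapse $M^{-1}BM^{-1}+M^{-2}=M^{-1}$, the rest being routine expansion plus a citation of Woodbury. The one thing worth double-checking along the way is dimensional consistency of the intermediate form $x^+=x+L(\epsilon-W^Tx)$ (with $L$ of size $d\times p$), since it is the regrouping into $(\mathbb{I}_d-LW^T)x+L\epsilon$ that the whole computation relies on.
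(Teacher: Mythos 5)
Your proposal is correct and follows essentially the same route as the paper: both compute the covariance of the affine Gaussian image $(\mathbb{I}_d-LW^T)x+L\epsilon$, observe that the cross terms and the combined term $L[W^T\Psi^{-1}W+\mathbb{I}_p]L^T=LML^T$ all reduce to $\pm\Psi^{-1}WM^{-1}W^T\Psi^{-1}$, and conclude via the Woodbury identity. Your side remark on dimensional consistency is well taken — the statement's intermediate expression should read $x+L(\epsilon-W^Tx)$ rather than $x+L(\epsilon-LW^Tx)$, and the regrouped form $(\mathbb{I}_d-LW^T)x+L\epsilon$ that both proofs actually use is the correct one.
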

\begin{proof}     
We have obviously $E(x^+)=0$. Moreover using the independence of the variables
\begin{align}
 E\left(x^+(x^+)^T\right)&=(\mathbb{I}_d-LW^T)\Psi^{-1}(\mathbb{I}_d-LW^T)^T+L\mathbb{I}_pL^T\\&=
 \Psi^{-1}-LW^T\Psi^{-1}-\Psi^{-1}WL^T+L[W^T\Psi^{-1}W+\mathbb{I}_p]L^T\label{secondline}\\
 &=\Psi^{-1}-\Psi^{-1}W(\mathbb{I}_p+W^T\Psi^{-1}W)^{-1}W^T\Psi^{-1},
\end{align}
since the three rightmost terms of \eqref{secondline} are all equal to $\pm\Psi^{-1}W(\mathbb{I}_p+W^T\Psi^{-1}W)^{-1}W^T\Psi^{-1}
$. 
\end{proof}

This suggests that starting from a decomposition of the form $WW^T + \Psi$ of the precision matrix $P^{-1}$, we know how to sample from a law $\mathcal N(0,P)$. We need to draw $x_1,\dots,x_K$ from $\mathcal N(0,\Psi^{-1})$  and $\epsilon_1,\dots,\epsilon_K$ from $\mathcal N(0,\mathbb{I}_p)$, and to let $x_i^+=x_i+L(\epsilon_i-LW^Tx_i)=(\mathbb{I}_d-LW^T)x_i+L\epsilon_i$, for $1\leq i\leq K$. All operations involved are linear in $d$, so that drawing $K$ samples is of order $O(Kd)$. As soon as $K$ is kept moderate with respect to $d$, the complexity is linear in $d$. 

\subsubsection{Dealing with the $d\times d$ Hessian matrix}\label{ggnsec}
In this paragraph, we address Point $3$ of the list above, where we propose an approximation of the Hessian based on an outer product.   
The Gauss-Newton approximation, which consists in approximating the Hessian with the outer product of the gradients, called also empirical Fisher, does not capture well second-order information (see for instance \citealt{kunstner2019}). As we have supposed the probability distribution belongs to an exponential family, we should rather consider the Generalized Gauss-Newton (GGN) approximation \citep{martens2014} which better approximates the local curvature \citep{kunstner2019}.  The GGN approximation exploits the structure of the exponential family  $p(y_t|\eta=h(\theta,x))$ as a composition of functions involving the natural parameter $\eta$ and the nonlinear model $h$. If the Hessian of $- \log p$ with respect to $\theta$ can be complicated, the Hessian with respect to $\eta$ is simply $\mathrm{Cov}(y|\theta)$.  We can use this property through the chain rule  to generalize the Gauss-Newton approximation. Our expected Hessian term can be written as follows:
\begin{align} 
 -\mathbb{E}_{\theta}   [\nabla^2_{\theta}  \log p(y_t|\theta)] \approx \mathbb{E}_{\theta}[\frac{\partial h}{\partial \theta}  \mathrm{Cov}(y|\theta)  \frac{\partial h}{\partial \theta}^T]= \mathbb{E}_{\theta} [\mathbb{F}(\theta)],
\end {align} 
where $\mathbb{F}(\theta)=\mathbb{E}_{y \sim p(y|\theta)} [- \nabla^2_{\theta}  \log p(y|\theta)]$ is the Fisher matrix. Under the GGN approximation, the covariance update no longer depends on the labels $y_t$ and only depends  on the inputs  $x_t$,  as in the linear case. The derivation is detailed in Appendix \ref{ProofLemmaGGN}. We can now combine this approximation with the approximation of the expectation and the implicit scheme to implement our final update. 

\subsubsection{Final algorithm}\label{OuterApprox}
We now define the form of matrix $X_t$ involved in the outer product in \eqref{upp;eq3}. Assuming that we make the scheme explicit using extra-gradients  (as described  in  Section \ref{extrasec}) and that we generate $K$ samples to approximate the expectation with Ensemble sampling described in  Section \ref{samplesec}, the GGN approximation becomes:
\begin{align}
P_t^{-1}  & \underset{\rm \hspace{0.3cm} GGN \hspace{0.3cm} }{\approx} W_{t-1}W_{t-1}^T+\Psi_{t-1} + \mathbb{E}_{\theta \sim \mathcal{N}(\mu_t,P_t)}[\frac{\partial h}{\partial \theta}  \mathrm{Cov}(y|\theta)  \frac{\partial h}{\partial \theta}^T] \\
& \underset{\rm Extragrad}{\approx} W_{t-1}W_{t-1}^T+\Psi_{t-1} + \mathbb{E}_{\theta \sim \mathcal{N}(\mu_{t-1},(W_{t-1}W_{t-1}^T+\Psi_{t-1})^{-1})}[\frac{\partial h}{\partial \theta}  \mathrm{Cov}(y|\theta)  \frac{\partial h}{\partial \theta}^T] \\
&\underset{\rm Sampling}{\approx} W_{t-1}W_{t-1}^T + \Psi_{t-1} + \frac{1}{K}\sum_{i=1}^K \frac{\partial h}{\partial \theta}(\theta_i) \mathrm{Cov}(y|\theta_i)^{1/2}  \mathrm{Cov}(y|\theta_i)^{1/2}  \frac{\partial h}{\partial \theta}^T (\theta_i)\\
&\quad = W_{t-1}W_{t-1}^T + \Psi_{t-1} + \frac{1}{K}\sum_{i=1}^K c_i c_i^T\quad \text{ where } \quad c_i=\frac{\partial h}{\partial \theta}(\theta_i)  \mathrm{Cov}(y|\theta_i)^{1/2}\\
&\quad = W_{t-1}W_{t-1}^T + \Psi_{t-1} + X_tX_t^T \quad \text{ where } \quad X_t=\frac{1}{\sqrt{K}} \begin{pmatrix} c_1 \cdots c_K \end{pmatrix} \\
&\underset{\rm \hspace{0.3cm}  FA \hspace{0.3cm} }\approx W_{t}W_{t}^T + \Psi_{t},  \label{outerGrad}
\end{align} 
where the last line refers to the FA approximation developed in Section \ref{LREM} applied to the mini-batch matrix $X_t$ of size $d \times K$. As long as the number of samples $K \ll d$, the  memory cost is kept linear in $d$. Regarding the other approximations, the approximation of the expectation with sampling seems not very sensitive and we recommend using a limited number of samples to reduce the memory cost. Indeed, we have observed few samples are sufficient to obtain a good approximation of the logistic regression problem considered in Section \ref{NonLin}.

\section{Experiments}
\label{xp}
Experiments have been performed  on synthetic data for different classes of problems, from simpler linear problems to more difficult nonlinear problems. The first class of problems addressed in Section \ref{largeScaleCov} deals with the approximation of large-scale covariance matrices that are too large to hold in memory. We show our approach requires far less memory than the batch EM algorithm which requires storing a $d \times d$ matrix in memory and competes with the online EM algorithm \citep{onlineEM}. These results are directly applied to linear regression in Section \ref{LinReg} where we propose computationally cheap updates for the linear Kalman filter in high dimensional spaces. In Section \ref{LogReg}, we consider the logistic regression problem for which we can solve the recursive variational scheme in a closed form at each step \citep{RVGA}. Finally, we turn to more general cases in  Section \ref{NonLin} where we assess our approximation method  for L-RVGA updates in the general case. 
 The logistic regression problem, where we know how to compute the expectation analytically, offers a baseline for comparison purposes. 
  
 We observe the mirror-prox method combined with ensemble sampling reaches the  baseline results in terms of KL divergence. This is promising regarding the generalization of our algorithms to arbitrary nonlinear problems. 
 
\subsection{Limited-memory approximation of large-scale covariance matrices }

In this section, we assess the limited-memory recursive EM algorithm on large-scale empirical covariance matrices. In this setting, we do not update a precision matrix but a covariance matrix with a moving average, this amounts to using  Algorithm \ref{algorithmRecEM} with $\alpha_t=(t-1)/t$ and $\beta_t=1/t$ (see Appendix \ref{FAforMatrix} for details).  To assess the method, we generate the data from an actual low-rank covariance plus diagonal matrix $S=\mathrm{Diag}(\psi)+WW^T$  for which the factor analysis approximation can be exact. We construct such a matrix with random parameters where $W$ is of size $d \times p$ and $\psi$ is a vector of size $d$. We then generate $N$ samples from the zero-mean Gaussian distribution equipped with this covariance matrix $S$. 
To make the experiments more appealing, we have also assessed our algorithm on the NIPS Madelon dataset and the Breast Cancer real dataset.  We have kept only the input data, using the LIBSVM library \citep{libsvm}, and normalized them using the same process as for the synthetic dataset.


\label{largeScaleCov}

\begin{figure}[!ht]
\includegraphics[scale=1]{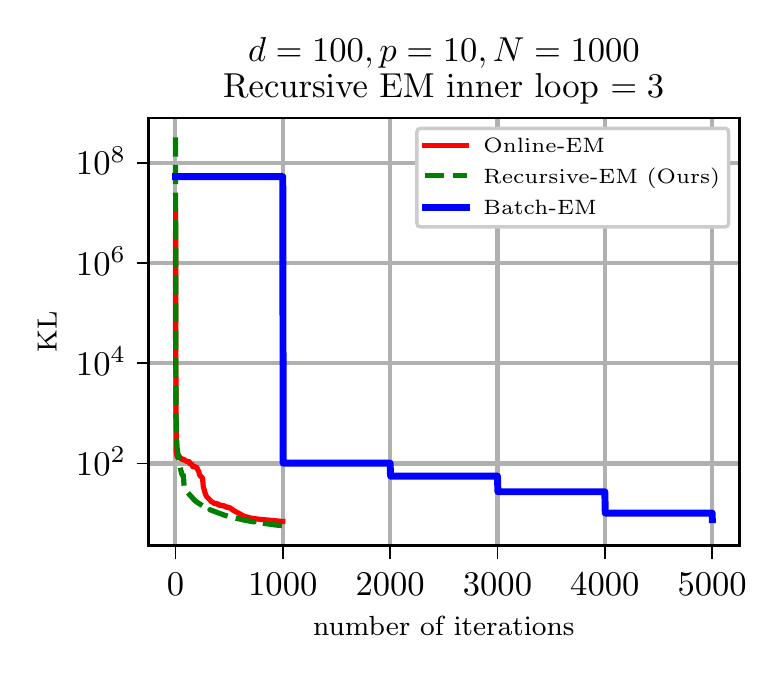}
\includegraphics[scale=1]{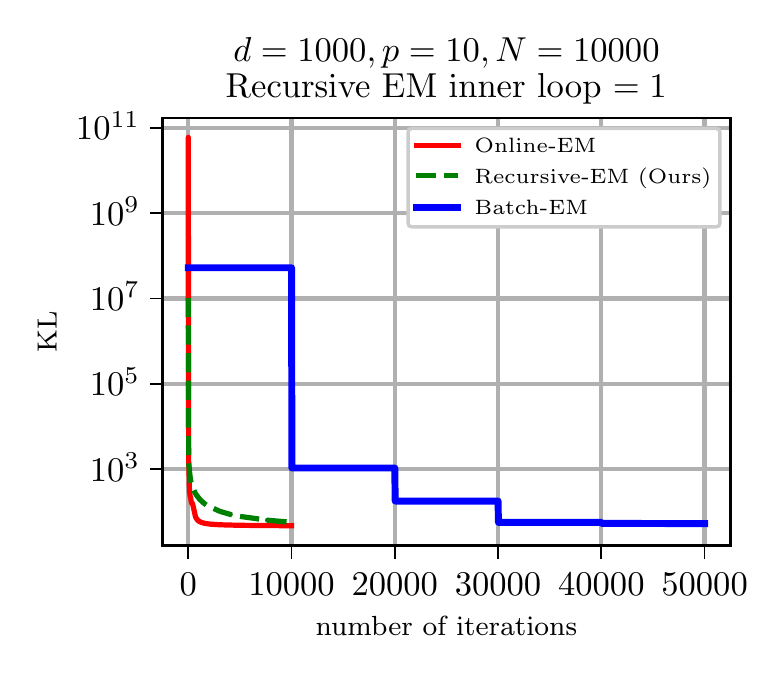}
\includegraphics[scale=1]{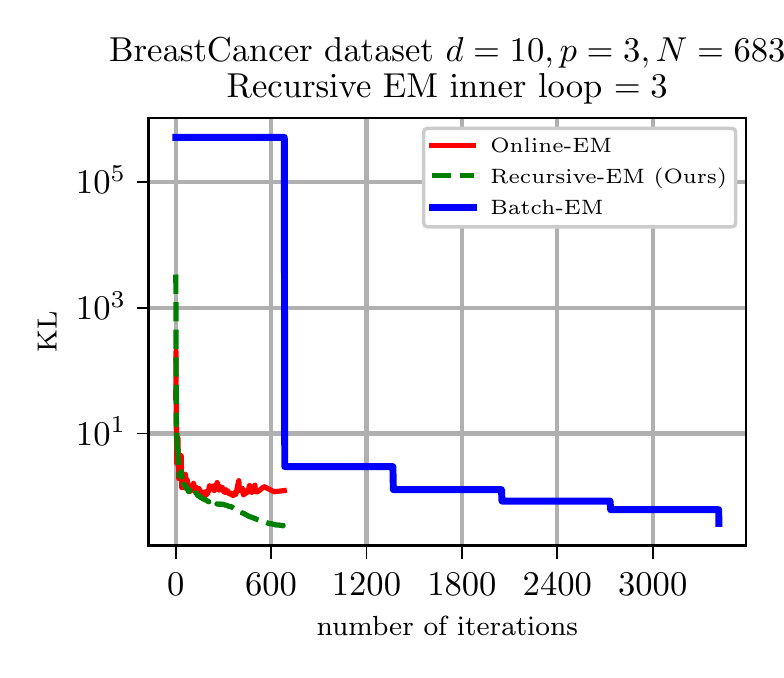}
\includegraphics[scale=1]{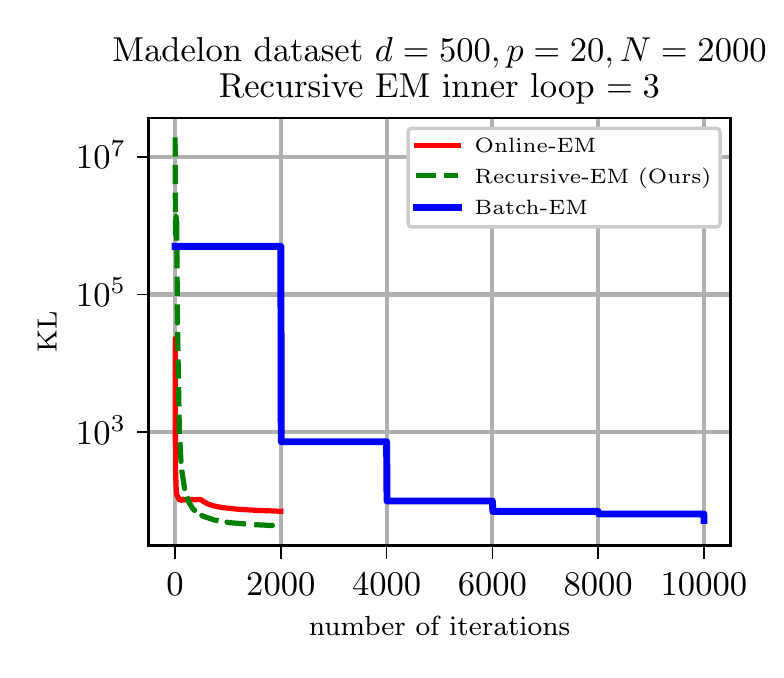}

\caption{Factorization of a synthetic large scale covariance matrix of dimension $d=100$ (upper left plot) and $d=1000$ (upper right plot). 
The lower plots show additional results on two  covariance matrices computed from the Madelon and Breast Cancer datasets. We show in dash green line our recursive EM algorithm, which is parameter-free, and in the red plain line, the online EM tuned with a learning rate $\gamma=\frac{1}{t^{0.6}}$ (see Appendix \ref{onlineEM} for implementation details).  
We show also the batch EM, in the solid blue line. The online and batch methods should be compared with respect to the number of observations processed. Batch estimation requires processing all the data several times, although more efficiently since 5 steps are enough to converge. The online variants use only a single passe.  With $3$ inner loops in dimension $d=100$ (upper left plot), the recursive EM is clearly superior to the online EM.  In the higher dimension (upper right plot), we achieve a similar loss to the online EM using only 1 inner loop in the recursive EM. Moreover, the recursive EM gives better results than the other algorithms on real datasets as illustrated in the lower left plot and lower right plot.} 
\label{figureXPCovHD}
\end{figure}
 
We run the batch EM algorithm on this set of $N$ samples and our limited memory EM algorithm recursively on each sample. The recursive EM generally converges to the result given by the batch version after only one pass through the data as shown in Figure \ref{figureXPCovHD}. Our recursive EM is also compared to the online EM algorithm \citep{onlineEM} which is also memory efficient. The derivation of the online EM for factor analysis and the implementation detailed have been moved to Appendix \ref{onlineEM}. 
While our method is parameter-free it yields results similar  to the online EM on the synthetic dataset and better results on the real datasets as shown in upper plots of Figure \ref{figureXPCovHD}. 

The advantage of online or recursive EM versions for factor analysis is that they maintain a memory cost linear in~$d$, which is an important feature in high dimension. To illustrate this feature, we performed large-scale covariance matrix approximation in dimension $1$ million. We see  in Table~\ref{tableXPCovHD} that the proposed algorithm may be performed using a regular laptop, contrary to the batch EM.  

\begin{table}[!ht]
\centering
\begin{tabular}{|c|c|c|c|c|}
  \hline
 Algorithm & Nb. iterations  &  Time per iteration & Memory cost \\
  \hline
    Batch EM (estimated)&  Fixed ($\ll N$) & -& \textbf{8000 GB}\\
      Online EM-$1$ inner loop & $N$ & 1 s & \textbf{1.6 GB}\\
  Recursive EM-$1$ inner loop & $N$ & 2 s  & \textbf{0.8 GB}\\
  \hline
\end{tabular}
\caption{\textbf{Memory test for large scale matrix factorization with $d=10^6$, and $p=100$}: Test dedicated for memory requirement, the recursive and online EM have been executed on a laptop (Intel core i7 at 2.3 GHz on CPU) in dimension one million, the memory cost for the batch EM have been estimated as it scales quadratically in $d$ and did not fit the memory, so no time is available for it. A reduced number of samples $N$ have been considered in this experiment since they do not influence the memory cost. Recursive EM requires more operations by iteration than online EM but consumes less memory than online EM and much less than batch EM.}
\label{tableXPCovHD}
\end{table}

\subsection{Application to Bayesian linear regression }
\label{LinReg}
We apply in this section the previous covariance approximation to derive a limited memory version of the linear regression problem as described in Section \ref{FA}. In the linear general  case,  we have an analytical form of the solution given by the linear Kalman filter with a static state which is our baseline. 

The linear Kalman filter is a parameter-free online algorithm that gives the exact solution of a linear regression problem after only one pass on the dataset, that is, it recursively computes exactly the posterior given all the data it has considered so far. However, it requires estimating online the  $d \times d$  covariance matrix of the inputs, which is intractable  in high dimension. This matrix is approximated with the recursive EM algorithm in this experiment to provide large-scale linear regression. 

The inputs $x_t$ are generated using  the following Gaussian distribution: 
\begin{align}
x_t \sim \mathcal{N}(0,C) \quad \text{with} \quad
C=
M^T
{\rm Diag}( 1, 1/2^c,\cdots, 1/d^c) 
 M,
\end{align}
where $M$ is an orthogonal rotation matrix and $c$ a coefficient driving the condition number ($c=1$ by default). Since the matrix $C$ is rotated, it is not directly available in a factor analysis form. The inputs are normalized on average. 

The outputs $y_t$ are generated with a normal noise:  
\begin{align}
&y_t=x_t^T\theta^* + \varepsilon_t \text{ with } \varepsilon_t \sim \mathcal{N}(0,1), 
\end{align}

We consider the Bayesian setting where $\theta$ is supposed to be a Gaussian random value and we want to estimate the parameters of its distribution $\mathcal{N}(\mu,P)$ where $P^{-1}=WW^T+\Psi$, given all the examples $(x_t,y_t)$ seen so far. We suppose that the prior follows a centered isotropic Gaussian distribution $\theta_0 \sim\mathcal{N}(0,\sigma_0^2 \mathbb{I}_d)$ and compute the initial values of the factor analysis parameters $W_0$ and $\Psi_{0}$ as detailed in  Appendix \ref{initPrior}. Our recursive scheme is sensitive to the parameter $\sigma_0$: a  high value of $\sigma_0$ (flat prior) may lead to bad conditioning at the first steps and make our recursive algorithm diverge. A low value for $\sigma_0$ (strong prior)  speeds up the convergence of the algorithm. This value defines the shape of the posterior and the difficulty of the estimation problem and can be used to test the robustness of our algorithm in the following experiments. In this Section, we consider a prior deviation $\sigma_0=1$. 

When $d=p$ we have checked that our algorithm matches the linear Kalman filter's estimates.  To assess how the factor analysis approximation degrades the results, we have done experiments for different values of the latent space dimension $p$. We see in Figure \ref{figureXPLin2} that the divergence decreases globally for all latent dimensions even if for lower values of $p$ the divergence may temporarily increase. As expected the convergence is faster for higher values of $p$. 

Table \ref{tableXPLin} details the memory cost for higher dimensions.

\begin{remk}[Uncertainty quantification and calibration] 

Using a lower value for $p$ may lead to a poor estimation of the uncertainty. In particular, the estimated covariance may not be  consistent with the empirical covariance. Variational inference is known to underestimate the target distribution but in critical applications, we may often prefer to overestimate the distribution. A common approach in Kalman filtering is to add a process noise to recover consistency.  In our setting, it will be equivalent to adding a  covariance matrix of noise $Q_t$ in the L-RVGA update \eqref{CovInf}. The covariance update will then rewrite $P_{t|t-1}=P_{t-1}+Q_t$ and $P_t^{-1}=P_{t|t-1}^{-1} +x_tx_t^T$. To formulate this update in terms of the factor analysis parameters $W_t$ and $\psi_t$, we can assume $Q_t=\eta_t P_{t-1}$ which corresponds to a fading memory filter. The parameter(s) of the covariance of noise may be learned online using a variational approach or an adaptive filter that adapts the covariance based on the prediction error (see for instance \cite{Zhang20}).  \end{remk}

\begin{figure}[!ht]
\begin{center}
\includegraphics[scale=1]{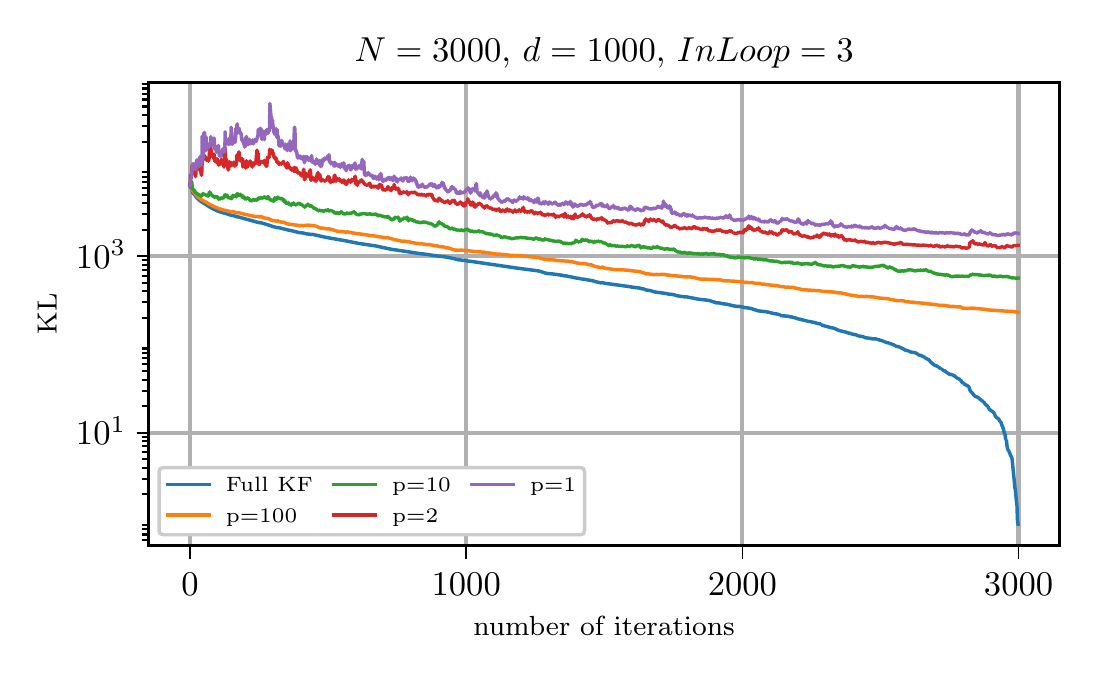}

\vspace*{-.2cm}

\caption{Sensitivity to the latent dimension $p$. We show the error as the KL divergence between the true posterior distribution and the L-RVGA posterior. The full Kalman error  converges exactly to $0$, as the Kalman filter exactly finds the posterior. 
When $p$ is very low,   oscillations occur.}
\label{figureXPLin2}
\end{center}
\end{figure}

\begin{table}[!ht]
\centering
\begin{tabular}{|c|c|c|c|c|}
  \hline
 Algorithm &  Nb. iterations  &  Time per iteration & Memory cost \\
  &   &  (for $3$ inner loops)  &  \\
  \hline
Full Kalman & $N$  & -  & \textbf{8000 GB}\\
L-RVGA ($p=100$)  & $N$ & 6 s & \textbf{816 MB}\\
L-RVGA ($p=10$)   & $N$ & 0.7 s & \textbf{96 MB}\\
L-RVGA ($p=2$)   & $N$  & 0.2 s  & \textbf{32 MB}\\
L-RVGA ($p=1$)   & $N$ & 0.06 s & \textbf{24 MB}\\
  \hline
\end{tabular}
\caption{\textbf{Memory requirements test for linear regression with $d=10^6$ and variable $p$}: The L-RVGA has been executed on a laptop (Intel core i7 at 2.3 GHz on CPU) in dimension one million, the memory cost for the full Kalman has been estimated as it scales quadratically in $d$ and did not fit into the memory, so that  execution  time is not available for it. A reduced number of samples $N$ has been considered in this experiment since they do not influence the memory cost. When $p$ is low, the cost in memory as well as  the time per iteration (with $3$ inner loops for the recursive EM) is drastically reduced.}
\label{tableXPLin}
\end{table}

\subsection{Application to Logistic regression }
\label{LogReg}
We now apply the method to derive a limited-memory version of a logistic regression problem which is a particular case of the generalized linear model described in Section \ref{LRVGA glm}. In this model, we seek to learn the parameter $\theta$ encoding the  hyperplane from $N$ examples $(x_t,y_t)$.  We consider the Bayesian setting where $\theta \sim \mathcal{N}(\mu,P)$ and the observations are generated from $y_t=\sigma(\mu^Tx_t)$, where $\sigma$ denotes here the logistic function. The estimation of $\mu_t$ and $P_t$   are given by the RVGA updates for general linear model \eqref{rvgaGLM1}-\eqref{rvgaGLM2}. These updates involve  the expectation of the gradient and the Hessian of the logistic loss which can be approximated as follow: 
\begin{align}
\mathbb{E}_{\theta}[\nabla_\theta \ell_t(\theta)]&=-y_tx_t + \mathbb{E}_{\theta}[\sigma(x_t^T\theta)]x_t \approx -y_tx_t + \sigma(k_t x_t^T\mu_t)   \label{ExpectLog1}\\
\mathbb{E}_{\theta}[\nabla^2_\theta \ell_t(\theta)]&=\mathbb{E}_{\theta}[\sigma(x_t^T\theta)(1-\sigma(x_t^T\theta))]x_tx_t^T \approx k_t\sigma^\prime(k_t x_t^T\mu_t)x_t x_t^T  \label{ExpectLog2}\\
&\text{ where }   k_t= \frac{\beta}{\sqrt{x_t^TP_{t}x_t+\beta^2}} \text{ and }  \beta=\sqrt {\frac{8}{\pi}}.  \label{kt:def}
\end{align}  

These equations were derived in our previous work (see \citep{RVGA}, Section 4.1)  and come  from the approximation of the  logistic function $\sigma$ with the inverse probit function $\phi$ \citep{barber1998}:$\sigma(x) \approx \phi(\frac{1}{\beta}x)=\frac{1}{2}(1+erf(\frac{x}{\sqrt{2}\beta})). $   
We  can then rewrite \eqref{rvgaGLM1}-\eqref{rvgaGLM2} as:
\begin{align}
&\mu_t=\mu_{t-1} + P_{t-1}x_t (y_t-  \sigma(k_t x_t^T\mu_t))  \label{updateMu} \\
&P_t^{-1}=P_{t-1}^{-1}+k_t\sigma^\prime(k_t x_t^T\mu_t)x_t x_t^T. \label{updateP}
\end{align} 
The scheme is now  implicit owing only to the two scalar parameters $\nu=x_t^TP_{t}x_t$ and $\alpha=x_t^T\mu_t$ which can efficiently be computed by solving a scalar fixed point equation with a Newton solver, see \citealt{RVGA}, Section 4.2. 

We can in turn apply our recursive EM algorithm on the input $X_t:=x_t \sqrt{k_t\sigma^\prime(k_t x_t^T\mu_t)}$ for high-dimensional logistic regression. To assess the performance, we generate $N$ synthetic pairs $(x_t,y_t)$ where the inputs are generated from the same Gaussian distribution as in the linear regression case. The inputs are normalized in mean and the prior is set to $\sigma_0=4$ which corresponds to a sharp posterior distribution.  In Figure~\ref{figureXPLog1} we plot the KL divergence between our current Gaussian estimate $\mathcal{N}(\mu_t,P_t)$ and the true posterior, that is  $KL\left(\mathcal{N}(\mu_t,P_t)||p(\theta|y_1,x_1,\dots y_N,x_N)\right )$. In logistic regression, we have a simple expression for the posterior at each $\theta$, up to a normalizing constant. As the former left KL divergence is an expectation under $\mathcal{N}(\mu_t,P_t)$, it may be approximated via Monte-Carlo sampling. This offers a way to perform relative comparisons between the algorithms in terms of divergence with respect to the true (unnormalized) posterior. For more details see \citealt{RVGA}.

We compare the results  for different values of the parameter $p$ with the Laplace approximation which gives a batch approximation of $\mu$ and $P$. The L-RVGA converges to the batch Laplace approximation and may even yield lower divergence. This is because the covariance given by the Laplace approximation spills out the true posterior to regions of very low probability whereas the L-RVGA avoids them. Contrary to the linear case, even with $p=1$, we obtain very good results. The memory requirements  in dimension one million yield identical costs to those reported in Table \ref{tableXPLin}  for linear regression. 
  
\begin{figure}[!ht]
\includegraphics[scale=1]{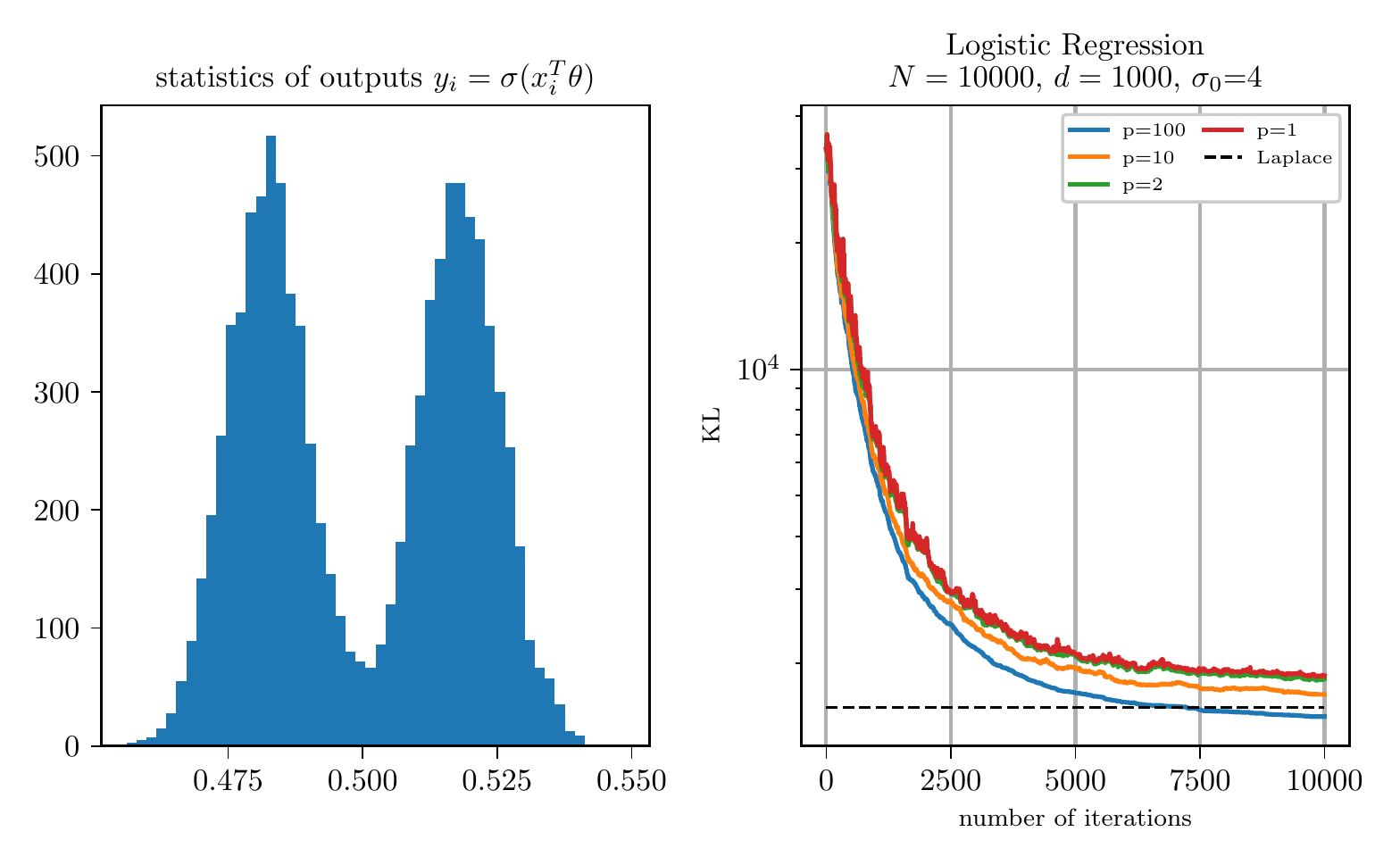}

\vspace*{-.2cm} 

\caption{Sensitivity to the latent dimension $p$. The L-RVGA algorithms may provide a lower divergence than the batch Laplace approximation in the sense of KL divergence to the true posterior. The estimated mean is roughly aligned with the maximum a posteriori (MAP). Only one inner loop proves sufficient to ensure convergence. The prior is set to be $P_0=\sigma_0 \mathbb{I}_d$ with $\sigma_0=4$ and the algorithm uses it for initialization.  The KL divergence is computed with sampling and is unnormalized.}
\label{figureXPLog1}
\end{figure}

\subsection{General nonlinear method evaluation}
\label{NonLin}
We address in this Section the general nonlinear L-RVGA, based on  the approximations detailed in Section~\ref{LRVGA general}. To assess the method, we choose to apply the general nonlinear method to logistic regression, again, as we have closed-form expressions that may serve as ground truth. 

As the generalized Gauss-Newton approximation, presented in Section \eqref{ggnsec}, is always exact in the logistic case, only the other approximations will be evaluated: sampling to approximate the expectation, extra-gradient to approximate the implicit scheme, and factor analysis to approximate the covariance matrix.

As logistic regression is based on a generalized linear model,  $\mu_t$ and $P_t$    are given by the RVGA updates for generalized linear models \eqref{rvgaGLM1}-\eqref{rvgaGLM2}. These updates involve  the expectation of the gradient and the Hessian of the logistic loss: 
\begin{align}
\mathbb{E}_{\theta}[\nabla_\theta \ell_t(\theta)]&=-y_tx_t + \mathbb{E}_{\theta}[\sigma(x_t^T\theta)]x_t   \label{ExpectLog1Z}\\
\mathbb{E}_{\theta}[\nabla^2_\theta \ell_t(\theta)]&=\mathbb{E}_{\theta}[\sigma(x_t^T\theta)(1-\sigma(x_t^T\theta))]x_tx_t^T   \label{ExpectLog2Z} .   
\end{align}  
Those expectations could be easily computed using equations \eqref{ExpectLog1}-\eqref{ExpectLog2}. For comparison purposes,  we may approximate those expectations  using the general sampling procedure of Section \ref{samplesec}, that is,
\begin{align}
&\mathbb{E}_{\theta}[\sigma(x_t^T\theta)]  \approx \frac{1}{K}  \sum_{i=1}^K \sigma(x_t^T\theta_i)  \label{is;eq}\\
&\mathbb{E}_{\theta}[\sigma(x_t^T\theta)(1-\sigma(x_t^T\theta))] \approx  \frac{1}{K}  \sum_{i=1}^K  \sigma(x_t^T\theta_i)(1-\sigma(x_t^T\theta_i)), \nonumber
\end{align} 
where the samples $\theta_i$ are drawn from the Gaussian distribution $\mathcal{N}(\mu,(\Psi + WW^T)^{-1})$ exactly using ensemble sampling as described in Section \ref{samplesec}. Table \ref{tableXPsampling} shows that the method approximates well the closed-form expression used in Section \ref{LogReg}. 

We then combined sampling with the extra-gradient method, where we skipped the extra covariance update, as described in Section \ref{extrasec} since we observed it helped convergence. We found few samples may be sufficient to provide good convergence of the algorithm in terms of KL divergence, as shown in Figure \ref{figureSampling2}. 

\begin{table}[!ht]
\centering
\begin{tabular}{|c|c|c|}
  \hline
 Method & $\mathrm{Tr} (WW^T+\Psi)^{-1}$ & $\mathbb{E}_{\theta}[ \sigma(x_t^T\theta)] $ \\
  \hline
Baseline & 9.24 & 0.7738\textbf{01}\\
Cholesky sampling & 9.66 & 0.7738\textbf{57}\\
\textbf{Ensemble sampling}  & 9.53 &  0.7738\textbf{55}\\
  \hline
\end{tabular}
\caption{\textbf{Test for the ensemble sampling approximation in dimension $d=10000$ and $p=10$}: 
Approximation of the expectation $\mathbb{E}_{\theta}[ \sigma(x_t^T\theta)] $ is performed  with $10$ samples. The closed-form expression with the inverse probit approximation is considered as the baseline. We also show how the trace of the matrix $(WW^T + \Psi)^{-1}$ is approximated. The Cholesky sampling uses the square root decomposition of $(WW^T+\Psi)^{-1}$. The ensemble sampling is our method described in Section  \ref{samplesec}.  
It achieves similar results as Cholesky sampling which is not memory-limited since it uses the full matrix and inverts it.  }
\label{tableXPsampling}
\end{table}

\begin{figure}[!ht]
\includegraphics[scale=0.78]{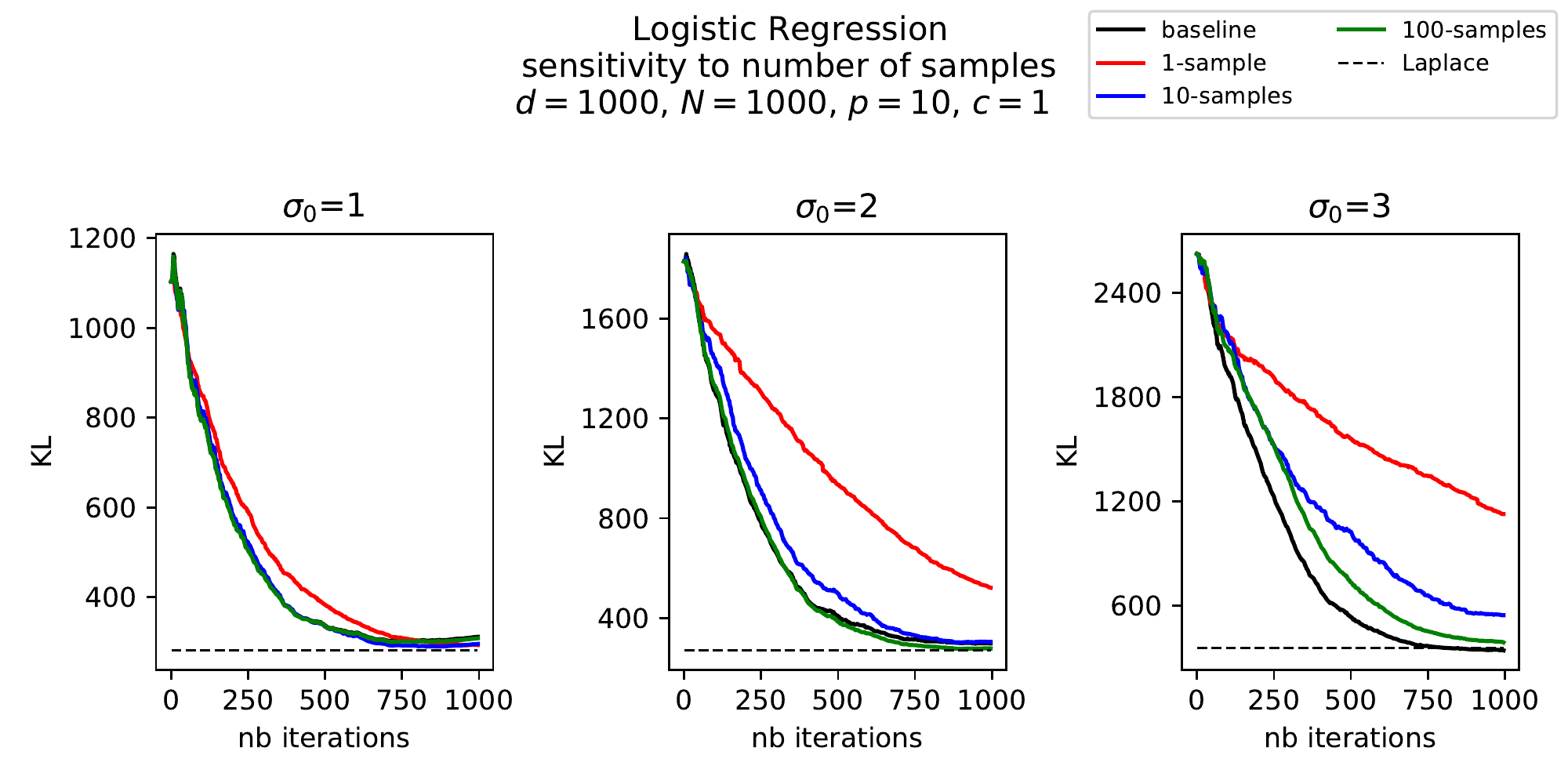}

\vspace*{-.2cm}

\caption{Approximation of the logistic posterior  for different deviations on the priors $\sigma_0=1,2$ and $3$. The inputs have been normalized on average. We compare the L-RVGA where the expectations are computed analytically (baseline) and the L-RVGA  where the expectations are approximated with sampling for $1$, $10$, and $100$ samples. All the algorithms use a Mirror prox (where we have skipped the extra covariance update) and use the same factor analysis approximation $p=10$. We see K = 10 samples are sufficient even in the more difficult case to converge to the batch Laplace KL.}
\label{figureSampling2}
\end{figure}

In the general nonlinear case, the L-RVGA algorithm can be advantageously used when the observations are in large dimensions and also in large numbers. A batch approach cannot be used on those problems where one can only afford multiple passes, or even a single pass, over the data set. Standard variational Bayesian methods based on a mean-field approach are not suitable for this class of problem where all observations are required at each iteration. \cite{Opper99} consider an online variational method using only one observation at each update. However, they used the right-side KL (moment matching) and lose the guarantee to increase an (evidence) lower bound. Approaches based on the stochastic maximization of the evidence lower bound \cite{Ong2018,SLANG} tackle the factor analysis problem in an online way using an adaptive step descent in contrast to the L-RVGA which doesn't use a step parameter. These algorithms may be more adaptable to the smoothness of the problem using step-tuning.  We anticipate that adding process noise and learning it online can provide adaptability, in particular, it can compensate for all the approximation errors we added in the process.

\paragraph{Sources:} The sources of the code are available on Github on the following repository: \url{https://github.com/marc-h-lambert/L-RVGA}.

\section*{Conclusion}
We have developed a new second-order algorithm, called L-RVGA, for online variational inference which scales to both large data sets and high dimensions. This algorithm is based on a two-stage variational problem that combines a variational Gaussian approximation followed by a factor analysis approximation of the inverse of the covariance matrix.   L-RVGA is able to estimate the mean and the covariance of the distribution of the latent parameters in a memory-efficient and parameter-free way and with only one pass through the data. We have tested it on linear and logistic regression problems and shown how to extend it to more general nonlinear problems with extra-gradients, memory-efficient sampling, and the generalized Gauss-Newton approximation. To build our generic algorithm we have introduced two new tools: a recursive EM algorithm for factor analysis which is parameter-free and faster than the online EM in this context; and a  sampler for Gaussian distribution with a structured precision matrix. 

Beyond variational inference, we anticipate the L-RVGA may prove useful for stochastic optimization keeping only the mean and using the covariance to build an adaptive learning step. This version could be compared to state-of-the-art algorithms in limited memory optimization such as Adagrad or even L-BFGS. We will investigate this direction in future work and will extend our algorithm using adaptive filtering techniques to further increase   its performance. We believe that a deeper connection between the communities of stochastic optimization and Kalman filtering may bring new ideas to tackle nonlinear stochastic problems. 

\subsection*{Acknowledgements}
This work was funded by the French Defence procurement agency (DGA) and by the French government under management of Agence Nationale de la Recherche as part of the “Investissements d’avenir” program, reference ANR-19-P3IA-0001(PRAIRIE 3IA Institute). We also acknowledge support from the European Research Council (grant SEQUOIA 724063).

\bibliographystyle{plainnat}
\bibliography{L-RVGA.bib}  

\appendix

\section{The recursive EM} 
\label{Appendix}

\subsection{Derivation of the fixed point equation for factor analysis with EM}
\label{AppendixFixedPointEM}
In this Section, we show that finding the factor analysis parameter with EM is equivalent to iterating through the fixed point equation  \eqref{FP-EM}. In the context of probabilistic principal component analysis \citet{Tipping1999}  have highlighted (in their Appendix $B$) how the EM can be rewritten as a fixed point equation. The following proof is an extension of this result to the factor analysis case.

In factor analysis, we want to approximate the empirical covariance matrix $\frac{1}{K}\sum_{i=1}^K v_iv_i^T$ with a \textquote{diagonal + low rank} structure $\Psi+WW^T$. In the classical EM approach, we introduce latent variables $z_i$ such that our samples can be rewritten $v_i=Wz_i + \varepsilon_i$ where $\varepsilon_i \sim \mathcal{N}(0,\Psi)$. We note $V=\begin{pmatrix} v_1 \cdots v_K\end{pmatrix}$ 
the sample matrix of size $d \times K$ such that $\frac{1}{K}\sum_{i=1}^K v_iv_i^T=\frac{1}{K}VV^T$ and  $Z=\begin{pmatrix} z_1  \cdots  z_K\end{pmatrix}$ the latent matrix of size $p \times K$.

At the expectation step (E-step), the latent variables $z_i$ are estimated conditionally on the observation and the current parameter estimates using the conditioning formula:
 \begin{align}
&p(z_i|v_i,W,\Psi)=\mathcal{N}(M^{-1}W^T\Psi^{-1}v_i,M^{-1}) \text{ where }   M=\mathbb{I}_p+W^T\Psi^{-1}W.
\end{align} 

At the maximization step (M-step), the latent variables are assumed fixed and the parameters are adjusted to maximize the total likelihood defined by:
\begin{align}
L(W,\Psi)=\log p(V,Z|W,\Psi)&=\sum_{i=1}^K \log p(v_i|z_i,W,\Psi) + \sum_{i=1}^K \log p(z_i) \\
&=-\frac{K}{2} (v_i-Wz_i)^T\Psi^{-1}(v_i-Wz_i)-\frac{K}{2}\log \det \Psi  -\frac{K}{2}  \log  (2 \pi) + \sum_{i=1}^K \log p(z_i).
\end{align}

The derivative of the expectation of $L$ with respect to $W$ gives:
\begin{align}
&\frac{\partial}{\partial W} \mathbb{E}_{Z \sim p(Z|V)}[L(W,\Psi)]= - \sum_{i=1}^K \Psi^{-1}W \mathbb{E}[z_iz_i^T|v_i] - \sum_{i=1}^K \Psi^{-1}v_i\mathbb{E}[z_i|v_i]^T=0,
\end{align}
and we obtain the optimum $W^{(n)}= \sum_{i=1}^K v_i\mathbb{E}[z_i|v_i]^T (\sum_{i=1}^K \mathbb{E}[z_tz_t^T|v_i])^{-1}$, where the $^{(n)}$ stands for ``new''.

Taking the derivative of the expectation of $L$ with respect to $\Psi^{-1}$ and keeping only the diagonal terms yields:
\begin{align}
\Psi^{(n)}&= {\rm{diag}}  (\frac{1}{K}\sum_{i=1}^K v_iv_i^T+\frac{1}{K} \sum_{i=1}^K W^{(n)} \mathbb{E}[z_iz_i^T|v_i]{W^{(n)}}^T -\frac{2}{K} \sum_{i=1}^K W^{(n)}\mathbb{E}[z_i|v_i] v_i^T) \label{eqDerivLik}\\
&= {\rm{diag}}  (\frac{1}{K}\sum_{i=1}^K v_iv_i^T-W^{(n)}\frac{1}{K} \sum_{i=1}^K \mathbb{E}[z_i|v_i] v_i^T),
\end{align}
where we have replaced ${W^{(n)}}^T$ by its optimal value in equation \eqref{eqDerivLik}.

And finally, the EM updates give:
\begin{align}
&\intertext{ \textbf{E-Step:}}
& \mathbb{E}[z_i|v_i]=M^{-1}W^T\Psi^{-1}v_i \nonumber \\
& \mathbb{E}[z_iz_i^T|v_i]=M^{-1}+\mathbb{E}[z_i|v_i]\mathbb{E}[z_i|v_i]^T.  \label{E-step2}
&\intertext{ \textbf{M-Step:}}
&W^{(n)}= \sum_{i=1}^K v_i \mathbb{E}[z_i|v_i]^T ( \sum_{i=1}^K \mathbb{E}[z_iz_i^T|v_i])^{-1} \nonumber\\
&\Psi^{(n)}={\rm{diag}} (\frac{1}{K}\sum_{i=1}^K v_iv_i^T-W^{(n)}\frac{1}{K} \sum_{i=1}^K \mathbb{E}[z_i|v_i]v_i^T). \label{M-step2}
\end{align}

The EM updates \eqref{E-step2} and \eqref{M-step2} can be rewritten in batch form using the matrix notation $V$ and $Z$:  
\begin{align}
&\intertext{ \textbf{E-Step:}}
&\mathbb{E}[Z|V]=M^{-1}W^T\Psi^{-1}V \nonumber\\
& \sum_{i=1}^K \mathbb{E}[z_iz_i^T|v_i]:=\mathbb{E}[ZZ^T|V]=KM^{-1}+\mathbb{E}[Z|V]\mathbb{E}[Z|V]^T=K(M^{-1}+M^{-1}W^T\Psi^{-1}S_K\Psi^{-1}WM^{-1}).
&\intertext{ \textbf{M-Step:}}
&W^{(n)}=V \mathbb{E}[Z|V]^T \mathbb{E}[ZZ^T|V]^{-1} \nonumber\\
&\Psi^{(n)}= {\rm{diag}}  (\frac{1}{K}VV^T-W^{(n)}\frac{1}{K} \mathbb{E}[Z|V]V^T).
\end{align}

And finally the E-step and M-step can be fused to form a fixed-point equation:

\begin{align*}
W^{(n)}&=V \mathbb{E}[Z|V]^T \mathbb{E}[ZZ^T|V]^{-1} \\
&=S_K \Psi^{-1}W M^{-1}(M^{-1}+M^{-1}W^T\Psi^{-1}S_K\Psi^{-1}WM^{-1})^{-1}\\
&=S_K \Psi^{-1}W (\mathbb{I}_p+M^{-1}W^T\Psi^{-1}S_K\Psi^{-1}W)^{-1}\\
\Psi^{(n)}&=\rm{diag} (\frac{1}{K}VV^T-W^{(n)}\frac{1}{K} \mathbb{E}[Z|V]V^T)\\
&=\rm{diag} (S_K-W^{(n)}M^{-1}W^T\Psi^{-1}S_K).
\end{align*}

which is the fixed point equation given in \eqref{FP-EM}.

\section{The fixed point EM is equivalent to the MLE fixed point}
\label{FixedPoint}
In this Section we show new results concerning the equivalence of the fixed points for the marginal likelihood (MLE algorithm) and the total likelihood (EM algorithm) for the particular case of factor analysis. We consider here the batch factor analysis problem. This result may not be applied to our RVGA algorithm but to each inner loop which is guaranteed to increase the one sample likelihood. 

The marginal likelihood and the total likelihood are both related as follows: 
\begin{align}
&\underset{W,\Psi}{\max}  \quad \log p(v_1, \dots, v_N|W,\Psi) \text{ where } S_N=\frac{1}{N}\sum_{i=1}^Nv_iv_i^T=P^{-1} \quad \text{ (MLE) }.\label{marg:li}\\
&\leq \underset{W,\Psi}{\max} \quad  \mathbb{E}_z[\log p(v_1, \dots, v_N,z|W,\Psi)] \text{ with } p(v\mid z)\sim\mathcal{N}(Wz,\Psi),~ p(z)\sim \mathcal{N}(0,I_p)\quad \text{ (EM) }.\label{tot:li}
\end{align}

The advantage of the EM approach is that the total likelihood appearing in  \eqref{tot:li} is guaranteed to increase at each step, i.e.,  the algorithm is stable. The EM algorithm may not always converge to the maximum of the marginal likelihood appearing in \eqref{marg:li} but converges to a stationary point which turns out to be also a stationary point for the maximum likelihood. This fact was already highlighted by \citet{incEM} and \citet{onlineEM} but we specify the result in the case of factor analysis in an algebraic way. We first write the maximum likelihood  as a fixed point, then show an equivalence between the two fixed points. This result is finally used to show an equivalence between different eigenvalues decomposition algorithms. 

\subsection{Factor analysis with maximum likelihood (MLE) as a fixed point.} 
\label{FAMLEfixedPoint}
\label{AppendixStandardFA}
In this section, we write the maximum likelihood over the parameters of the factor analysis in the form of a fixed-point equation. This is a well-known result derived from chapter $21.2$ of the book of \citet{barber2011} but we want to use this result to make a connection with the fixed point obtained with EM. We want to approximate a matrix $S$ with a matrix $WW^T+\Psi$ by maximizing the following likelihood (where $C$ is a constant): 
\begin{align}
&\underset{W,\Psi}{\max}  \quad L(W,\Psi)= \underset{W,\Psi}{\max} \quad -\frac{N}{2}  \mathrm{Tr}((WW^T+\Psi)^{-1}S_N) - \frac{N}{2}  \log \det (WW^T+\Psi) + C.
\intertext{A necessary condition on the optimal solution is to zeroing the gradients:  }
&\frac{\partial L(W,\Psi)}{\partial W} = (WW^T+\Psi)^{-1}W^T-(WW^T+\Psi)^{-1}S_N(WW^T+\Psi)^{-1}W=0 \\
&\frac{\partial L(W,\Psi)}{\partial \Psi}=(WW^T+\Psi)^{-1}-(WW^T+\Psi)^{-1}S_N(WW^T+\Psi)^{-1}=0,
\intertext{leading to the fixed-point equations:}
&\textbf{Fixed-point equations for the MLE} \\
&W^{(n)}=S_N(WW^T+\Psi)^{-1}W \label{eqFAappendix2} \\
&\Psi=\rm{diag}(S_N-W^{(n)}{W^{(n)}}^T).
\end{align}
 
\subsection{Equivalence of fixed points and convergence}
\label{FixedPoints}
The following proposition shows the algebraic  equivalence of the fixed points.
\begin{proposition}
The factor analysis parameters  which maximize the marginal likelihood, that is, the solution to  \eqref{marg:li} satisfy  the following fixed point equation: 
\begin{align}
W^{(n)}&=S_N(WW^T+\Psi)^{-1}W \\
\Psi&={\rm{diag}}(S_N-W^{(n)}{W^{(n)}}^T).
\intertext{This fixed point equation is equivalent to the EM fixed-point equation:}
W^{(n)}&=S_N \Psi^{-1}W (\mathbb{I}_p+M^{-1}W^T\Psi^{-1}S_N\Psi^{-1}W)^{-1}\\
&\text{where: } M=\mathbb{I}_p+W^T\Psi^{-1}W \\
\Psi&={\rm{diag}} (S_N-W^{(n)}M^{-1}W^T\Psi^{-1} S_N).
\end{align}As a consequence, the EM algorithm converges to a stationary point which is also a stationary point for the  likelihood. \label{fix:prop}
\end{proposition}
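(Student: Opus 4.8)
The plan is to show the EM and MLE fixed-point systems for factor analysis have exactly the same solution set, and then to conclude by combining this equivalence with the standard ascent property of EM and the gradient characterization of the MLE fixed point derived in Section~\ref{FAMLEfixedPoint}. Throughout I would write $M=\mathbb{I}_p+W^T\Psi^{-1}W$ and assume $\Psi$ positive definite, so that $M$ is invertible --- the only regularity needed. The single algebraic identity doing all the work comes from the Woodbury formula
\[
(WW^T+\Psi)^{-1}=\Psi^{-1}-\Psi^{-1}WM^{-1}W^T\Psi^{-1},
\]
together with $W^T\Psi^{-1}W=M-\mathbb{I}_p$ (hence $M^{-1}W^T\Psi^{-1}W=\mathbb{I}_p-M^{-1}$): substituting yields the compact form
\[
(WW^T+\Psi)^{-1}W=\Psi^{-1}WM^{-1}.\qquad(\star)
\]

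Next I would rewrite both $W$-updates at a fixed point, where $W^{(n)}=W$. Using $(\star)$, the MLE equation $W=S_N(WW^T+\Psi)^{-1}W$ collapses to $W=S_N\Psi^{-1}WM^{-1}$, i.e.
\[
WM=S_N\Psi^{-1}W.\qquad(\dagger)
\]
Multiplying the EM fixed-point equation \eqref{FP-EM} through on the right by $\mathbb{I}_p+M^{-1}W^T\Psi^{-1}S_N\Psi^{-1}W$ gives, with $B:=W^T\Psi^{-1}S_N\Psi^{-1}W$,
\[
W(\mathbb{I}_p+M^{-1}B)=S_N\Psi^{-1}W.\qquad(\ddagger)
\]
The heart of the argument is $(\dagger)\Leftrightarrow(\ddagger)$. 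Left-multiplying $(\ddagger)$ by $W^T\Psi^{-1}$ and using $W^T\Psi^{-1}W=M-\mathbb{I}_p$ gives $B=(M-\mathbb{I}_p)(\mathbb{I}_p+M^{-1}B)$, which expands and simplifies to $M^{-1}B=M-\mathbb{I}_p$; feeding this back into $(\ddagger)$ turns it into $WM=S_N\Psi^{-1}W$, which is $(\dagger)$. Conversely, left-multiplying $(\dagger)$ by $W^T\Psi^{-1}$ gives $B=(M-\mathbb{I}_p)M$, so $\mathbb{I}_p+M^{-1}B=M$ and $(\dagger)$ is literally $(\ddagger)$.

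Matching the two $\Psi$-updates is then immediate once $(\dagger)$ holds: transposing $(\dagger)$ (all matrices symmetric) gives $W^T\Psi^{-1}S_N=MW^T$, hence $WM^{-1}W^T\Psi^{-1}S_N=WW^T$, so $\mathrm{diag}(S_N-W^{(n)}M^{-1}W^T\Psi^{-1}S_N)=\mathrm{diag}(S_N-W^{(n)}{W^{(n)}}^T)$. Thus the EM and MLE fixed-point systems coincide. To finish, I would invoke that EM is a monotone ascent on the total likelihood, so every accumulation point of the EM iteration is a fixed point of that iteration; by the equivalence just proved it is a fixed point of the MLE iteration, which by the computation of Section~\ref{FAMLEfixedPoint} is precisely the stationarity condition $\nabla_{W,\Psi}L=0$ of the marginal log-likelihood. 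Hence EM converges to a stationary point of the marginal likelihood, as claimed.

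I expect the real obstacle to be the equivalence $(\dagger)\Leftrightarrow(\ddagger)$: the nonobvious step is recognizing that left-multiplication by $W^T\Psi^{-1}$ closes the system and pins down $B$ as a polynomial in $M$; the rest is bookkeeping with $(\star)$ and $W^T\Psi^{-1}W=M-\mathbb{I}_p$. A secondary point worth stating carefully is the standard caveat that EM need not converge as a sequence --- only its limit points are guaranteed stationary --- and that positivity of $\Psi$ must be maintained along the iteration for every inverse above to be well defined.
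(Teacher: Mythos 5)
Your proposal is correct and follows essentially the same route as the paper's proof: both rest on the Woodbury identity $(WW^T+\Psi)^{-1}W=\Psi^{-1}WM^{-1}$ to collapse the MLE fixed point to $WM=S_N\Psi^{-1}W$, and then identify this with the EM fixed point by substituting the fixed-point value of $W$ back into the inner inverse, with the $\Psi$-equations matching as an immediate consequence. Your version is in fact slightly more careful than the paper's, since by introducing $B=W^T\Psi^{-1}S_N\Psi^{-1}W$ and pinning it down as $M^{-1}B=M-\mathbb{I}_p$ you establish both directions of the equivalence explicitly, whereas the paper only writes out the chain from the MLE form to the EM form.
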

 
\begin{proof}
The proof for the equivalence of the fixed points is straightforward,  the update for $W$ can be rewritten as:
\begin{align}
W&=S_N(WW^T+\Psi)^{-1}W\\
&=S_N\Psi^{-1}W(\mathbb{I}_p+W^T\Psi^{-1}W)^{-1} \text{ (From the Woodbury formula)} \label{FP1}\\
&=S_N\Psi^{-1}WM^{-1}  \label{FP2}\\
&=S_N\Psi^{-1}W(\mathbb{I}_p+(S_N\Psi^{-1}WM^{-1})^T\Psi^{-1}W)^{-1}\\
&=S_N \Psi^{-1}W (\mathbb{I}_p+M^{-1}W^T\Psi^{-1}S_N\Psi^{-1}W)^{-1},\\
\intertext{where we have replaced the term $W^T$ in \eqref{FP1} by its development in \eqref{FP2}.}
\intertext{The update for $\Psi$ can be rewritten as:}
\Psi&=\rm{diag} (S_N-WW^T) \label{FP3}\\
&={\rm{diag}} (S_N-W(S_N\Psi^{-1}WM^{-1} )^T)\\
\intertext{where we have replaced the term $W^T$ in \eqref{FP3} by its development in \eqref{FP2}, leading to} 
\Psi&={\rm{diag}} (S_N-WM^{-1}W^T\Psi^{-1} S_N) \\
&={\rm{diag}} (S_N-W_nM^{-1}W^T\Psi^{-1} S_N).
\end{align}
\end{proof}

\subsection{Relation to singular value decomposition}

The fixed point equation from the maximum likelihood is solved using a singular value decomposition (SVD) of $S_N$ (see \citet{barber2011}, chapter $21.2$). The equivalence with the EM fixed point suggests that the EM make implicitly an SVD decomposition. It can be shown it is the case if we consider an asymptotically Probabilistic Principal Component analysis form (PPCA), ie $\Psi=\sigma I$ where we let tends the parameter $\sigma$ to $0$. This result was shown by \citet{Roweis97} for the fixed point EM with PPCA and is related to the MLE fixed point in the following Corollary.

\begin{corollary}[of Prop. \ref{fix:prop}]
\begin{align}
\intertext{The factor analysis MLE fixed-point equation  for PPCA:}
W^{(n)}&=S_N(WW^T+\sigma^2 \mathbb{I}_d)^{-1}W=S_NW(\sigma^2 \mathbb{I}_p+W^TW)^{-1}  
\intertext{converge for $\sigma \rightarrow 0$ to a fixed point :}
W^{(n)}&=S_NW(W^TW)^{-1},
\intertext{which corresponds to the power method:}
w& \leftarrow S_N\frac{w}{||w||^2} \quad \text{  (given here in vectorial form)}.
\intertext{The factor analysis EM fixed-point for PPCA:}
W_n&=S_N W (\sigma^2 \mathbb{I}_p+(\sigma^2 \mathbb{I}_p+W^TW)^{-1}W^TS_NW)^{-1} 
\intertext{converge for $\sigma \rightarrow 0$ to a fixed point :}
W^{(n)}&=S_N W (W^TS_NW)^{-1}W^TW,
\intertext{which is equivalent to the EM-PCA  method \citep{Roweis97}:}
X&=(W^TW)^{-1}W^TY\\
W^{(n)}&=YX^T(XX^T)^{-1} \text{ where  $Y$ is defined such that } S_N=YY^T.
\end{align}
\end{corollary}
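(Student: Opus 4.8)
The plan is to take the two fixed-point equations of Proposition~\ref{fix:prop}, specialize them to the PPCA covariance $\Psi=\sigma^2\mathbb{I}$, rearrange so that the divergent factors $\sigma^{-2}$ cancel \emph{before} passing to the limit, let $\sigma\to 0$, and finally match the limiting equations with the power method and with Roweis's EM-PCA by direct substitution. Throughout I would assume $W$ has full column rank $p$ (so $W^TW$ is invertible) and $S_N$ is nonsingular on the column space of $W$ (so $W^TS_NW$ is invertible), which is what makes all the inverses below legitimate and the limits finite.

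\emph{MLE side.} Substituting $\Psi=\sigma^2\mathbb{I}_d$ into the MLE fixed point $W^{(n)}=S_N(WW^T+\Psi)^{-1}W$ and using the push-through identity $(WW^T+\sigma^2\mathbb{I}_d)W=W(W^TW+\sigma^2\mathbb{I}_p)$, i.e.\ $(WW^T+\sigma^2\mathbb{I}_d)^{-1}W=W(W^TW+\sigma^2\mathbb{I}_p)^{-1}$, gives $W^{(n)}=S_NW(\sigma^2\mathbb{I}_p+W^TW)^{-1}$. Letting $\sigma\to 0$ yields $W^{(n)}=S_NW(W^TW)^{-1}$. For $p=1$, writing $W=w$ so that $W^TW=\|w\|^2$, this is exactly $w\leftarrow S_Nw/\|w\|^2$; at a fixed point this forces $S_Nw=\|w\|^2w$, so $w$ is an eigenvector of $S_N$, consistent with the claim that this is (an unnormalized variant of) the power iteration.

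\emph{EM side.} Starting from the EM fixed point $W^{(n)}=S_N\Psi^{-1}W\big(\mathbb{I}_p+M^{-1}W^T\Psi^{-1}S_N\Psi^{-1}W\big)^{-1}$ with $M=\mathbb{I}_p+W^T\Psi^{-1}W$, I substitute $\Psi^{-1}=\sigma^{-2}\mathbb{I}_p$, so that $M=\sigma^{-2}(\sigma^2\mathbb{I}_p+W^TW)$ and $M^{-1}=\sigma^2(\sigma^2\mathbb{I}_p+W^TW)^{-1}$. Plugging these in, the factor $\sigma^{-2}$ multiplying $W$ is exactly compensated by a $\sigma^2$ pulled out of the bracket, leaving the form stated in the Corollary, $W^{(n)}=S_NW\big(\sigma^2\mathbb{I}_p+(\sigma^2\mathbb{I}_p+W^TW)^{-1}W^TS_NW\big)^{-1}$. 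Passing to $\sigma\to 0$ the bracket tends to $(W^TW)^{-1}W^TS_NW$, hence $W^{(n)}=S_NW\big((W^TW)^{-1}W^TS_NW\big)^{-1}=S_NW(W^TS_NW)^{-1}W^TW$.

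\emph{Identification with EM-PCA.} Finally, writing $S_N=YY^T$ for any factorization of $S_N$, I substitute Roweis's E-step $X=(W^TW)^{-1}W^TY$ into his M-step $W^{(n)}=YX^T(XX^T)^{-1}$: one computes $YX^T=YY^TW(W^TW)^{-1}=S_NW(W^TW)^{-1}$ and $XX^T=(W^TW)^{-1}W^TS_NW(W^TW)^{-1}$, so $W^{(n)}=S_NW(W^TW)^{-1}\cdot(W^TW)(W^TS_NW)^{-1}(W^TW)=S_NW(W^TS_NW)^{-1}W^TW$, which coincides with the $\sigma\to0$ limit derived just above. I do not expect any serious obstacle here: the only point requiring care is the rearrangement of the EM equation so that the $\sigma^{-2}$ blow-ups cancel prior to taking the limit, and keeping track of the full-rank/nonsingularity hypotheses on $W$ and $W^TS_NW$ that justify the inverses and the continuity of the limit.
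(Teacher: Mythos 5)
Your derivation is correct and is precisely the ``direct'' computation the paper leaves implicit (its stated proof is one line): push-through identity on the MLE side, cancellation of the $\sigma^{-2}$ factors before the limit on the EM side, and substitution of the E-step into the M-step to recover $S_NW(W^TS_NW)^{-1}W^TW$. The only slip is notational --- you write $\Psi^{-1}=\sigma^{-2}\mathbb{I}_p$ where it should be $\sigma^{-2}\mathbb{I}_d$ --- and your explicit full-rank hypotheses on $W$ and $W^TS_NW$ are a useful addition the paper omits.
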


\begin{proof}
The proof is direct.
\end{proof}

\section{Recursive factor analysis of the covariance matrix}\label{FAforMatrix} 
The retained factorization $W_tW_t^T+\Psi_t$ of the precision matrix $P_t^{-1}$ associated with the Bayesian parameter $\theta_t$ is not conventional in factor analysis, which is usually applied to the empirical covariance matrix $S_t=\frac{1}{t} \sum_{i=1}^t x_ix_i^T$ of the inputs $x_t$.  
In the linear case, both can be related as follows $S_t :=\frac{1}{t}P_t^{-1}$ for $t >0$, and at $t=0$ we let $S_0=P_0^{-1}$. $S_t$ can be expressed in a recursive way as:
\begin{align}
&S_t=\frac{1}{t} P_t^{-1}=\frac{t-1}{t}\frac{P_{t-1}^{-1}}{t-1} +\frac{1}{t} x_tx_t^T=\frac{t-1}{t}S_{t-1}+\frac{1}{t} x_tx_t^T. \label{CovEmp}
\end{align}
If we let $S_0$ be null, we obtain exactly $\frac{1}{t} \sum_{i=1}^t x_ix_i^T$, otherwise, we obtain a regularized version of the empirical covariance. 
The corresponding recursive factor analysis form is: 
\begin{align}
&S_t=\frac{t-1}{t}(W_{t-1}W_{t-1}^T+\Psi_{t-1})+\frac{1}{t} x_tx_t^T,  \label{CovEmpFA}
\end{align}
which fits into the problem \eqref{NotNormalizedRecFA2}. It may thus be addressed through    Algorithm \ref{algorithmRecEM} with $\alpha_t=(t-1)/t$ and $\beta_t=1/t$. We may guess the value of $S_0$ to initialize the procedure and make it more stable. Observing that $\mathrm{Tr} \frac{1}{N} \sum_{t=1}^N x_tx_t^T=\frac{1}{N} \sum_{t=1}^N ||x_t||^2$ we see that the trace of the unknown covariance must match the expectation of the square norm of inputs. This expectation may be estimated on a batch of size $M$. We use this property to compute $S_0$ as follows: get a batch formed by the first $M$ data inputs   $x_1,\dots,x_M$ , and let  $S_0=\frac{1}{\sigma_0^2} \mathbb{Id}$ with $\sigma_0=\sqrt{\frac{d}{ \frac{1}{M}  \sum_{t=1}^M ||x_t||^2}}$. This is equivalent to normalizing the inputs in mean and set $\sigma_0=\sqrt{d}$. 

\section{Initialization and prior information}\label{initPrior}
Given a prior covariance $P_0$ on  $\theta$, we want to initialize $W_0$ and $\Psi_0$ such that $W_0W_0^T+\Psi_0 \approx P_0^{-1}$. 
We must suppose $P_0$ is sparse enough to fit in memory and has an inverse which can be approximated by $W_0W_0^T+\Psi_0$ for example using the Woodbury formulas. For the experiments, we consider an isotropic initial covariance $P_0=\sigma_0^2 \mathbb{I}_d$.  We then compute $W_0$ and $\Psi_0$ such that  $W_0W_0^T+\Psi_0 = \frac{1}{\sigma_0^2} \mathbb{I}_d$.  The simple choice $\Psi_0 =\frac{1}{\sigma_0^2} \mathbb{I}_d$ and $W_0=0_{d\times p}$ would make the algorithm run into problems as $W_0=0_{d\times p}$  is a  stationary point of the fixed-point equation \eqref{FP-EM}.

We use the following rule $\Psi_0=\psi_0 \mathbb{I}_d$ where $\psi_0>0$ is a scalar and generate $W_0$ as a $d\times p$ matrix whose columns are the vectors $u_0^1,u_0^2,\dots,u_0^p$ independently drawn from an isotropic Gaussian distribution in $\mathbb R^d$ and which have been normalized so that $\forall k: ||u_0^k||=w_0$. We then let:
$$
\psi_0=(1-\varepsilon) \frac{1}{\sigma_0^2},\qquad w_0=\sqrt{\frac{\varepsilon d}{p}}  \frac{1}{\sigma_0},
$$with $0<\varepsilon\ll 1$ a small parameter. The rationale is that   $W_0W_0^T = \sum_{k=1}^p u_0^k {u_0^k}^T$ so that we have:
$$ \mathrm{Tr} (W_0W_0^T+\Psi_0)=  \sum_{k=1}^p \mathrm{Tr} (u_0^k {u_0^k}^T)+\psi_0 \mathrm{Tr} \mathbb{I}_d=p w_0^2+d\psi_0=\frac{d}{\sigma_0^2}=\mathrm{Tr}P_0^{-1}.$$

\begin{remk}
This initialization can be extended to the case where $P_0$ is a diagonal matrix $P_0=\mathrm{Diag}(\sigma^2_1,\dots,\sigma^2_d)$ where $\sigma_i^2$ represents now a variance on the $i^{th}$ coordinate. 
\end{remk}

\section{Derivation of the online EM algorithm for factor analysis}
\label{onlineEM} 
In this section, we derive the online EM algorithm  \citep{onlineEM} for the factor analysis problem to compare it to our Recursive EM algorithm. We use the same notation as in the previous section where $v_1,\dots,v_N$ are our $N$ observations in dimension $d$ and $z_1,\dots,z_N$ are our latent variables in dimension $p$. Moreover, we suppose that each couple of variables $(z_t,v_t)$ are independent and belong to an exponential family given by $\log p(z_t,v_t|\theta)=<S(z_t,v_t),\phi(\theta)>-F(\theta)$, where  $F$ is the log partition function, $\phi$ is a function which map the natural parameter and $S(z_t,v_t)$ are the sufficient statistics.

The online EM algorithm \citep{onlineEM} considers the following  fixed point equation with respect to the sufficient statistics $S$:
\begin{align}
&S=\mathbb{E}_{v_t} \mathbb{E}_{z_t \sim p(z|v_t,\theta^*(S))}[S(z_t,v_t)]=T(S) \\
&\text{ where } \theta^*(S)=\underset {\theta \in \Theta}{\arg \max} \sum_{t=1}^N <S(z_t,v_t),\phi(\theta)>-F(\theta),
\end{align}
and solve $T(S)-S$ using a stochastic root solver based on the Robbins–Monro algorithm \citep{robbins1951}. The sufficient statistics $S$ and the optimal parameter $\theta$ are update online with an adaptive step $\gamma_t$ at each new incoming observations $v_t$ as  follows:
\begin{align}
& S_t = (1-\gamma_t) S_{t-1} + \gamma_t \mathbb{E}_{z_t \sim p(z|v_t,\theta_{t-1})}[S(z_t,v_t)] \quad \text{ (E-step)}\\
& \theta_t=\underset {\theta \in \Theta}{\arg \max} <S_t,\phi(\theta)>-F(\theta) \quad \text{ (M-step)}.
\end{align}

In the case of factor analysis, the joint distribution is:
\begin{align}
\log p(v_t,z_t) = -\frac{1}{2} \mathrm{Tr}[\Sigma^{-1}S(v_t,z_t)] -\frac{1}{2} \log \det \Sigma +c,
\end{align}
where the joint covariance is :
\begin{align}
\Sigma=\begin{pmatrix} WW^T+\Psi & W \\ W^T & I_p \end{pmatrix},
\end{align}
and the sufficient statistics are :
\begin{align}
S(v_t,z_t)=\begin{pmatrix} v_t \\ z_t \end{pmatrix}  \begin{pmatrix} v_t \\ z_t \end{pmatrix}^T=\begin{pmatrix} v_t v_t^T & v_t z_t^T \\ z_tv_t^T & z_tz_t^T \end{pmatrix}.
\end{align}

The expectation of sufficient statistics  gives 
\begin{align}
\mathbb{E}_{z_t \sim p(z|v_t,\theta_{t-1})}[S(v_t,z_t)] &=\begin{pmatrix} v_t v_t^T & v_t \mathbb{E}[z_t^T|v_t]   \\ \mathbb{E}[z_t|v_t]v_t^T & \mathbb{E}[z_tz_t^T|v_t] \end{pmatrix} 
\end{align}

Rather than update the full matrix $S(v_t,z_t)$ we will update the blocks: $S_{1.t}=v_t v_t^T$, $S_{2.t}=\mathbb{E}[z_t|v_t]v_t^T$ and $S_{3.t}=\mathbb{E}[z_tz_t^T|v_t]$, which are necessary to compute the M-step. 

Finally, using the same notation as in the previous Section, the online EM updates for factor analysis become:
\begin{equation}
\label{StochEM}
\begin{aligned}
& \textbf{Online EM}\\
& \textbf{ E-step}\\
S_{1.t}&=(1-\gamma_t)  S_{1.t-1} + \gamma_t v_tv_t^T\\
S_{2.t}&= (1-\gamma_t)  S_{2.t-1} + \gamma_t  \mathbb{E}[z_t|v_t]v_t^T \\
&=(1-\gamma_t)  \mu_{t-1} + \gamma_t M_{t-1}^{-1}W_{t-1}^T\Psi_{t-1}^{-1}v_tv_t^T \\
S_{3.t}&= (1-\gamma_t)  S_{3.t-1} + \gamma_t \mathbb{E}[z_tz_t^T|v_t] \\
&= (1-\gamma_t)  S_{3.t-1} + \gamma_t (M_{t-1}^{-1}+M_{t-1}^{-1}W_{t-1}^T\Psi_{t-1}^{-1}v_t v_t^T\Psi_{t-1}^{-1}W_{t-1}M_{t-1}^{-1})\\
& \textbf{ M-step} \\
W_t&=  S_{2.t}^T S_{3.t}^{-1} \\
\Psi_t&=\rm{diag} (S_{1.t})-\rm{diag} (W_t S_{2.t}).
\end{aligned}
\end{equation}

To develop a limited memory version of this algorithm, we store only the diagonal of the high dimensional squared matrix $S_{1.t}$ and update it as follows:
\begin{align}
& \rm{diag} (S_{1.t})=(1-\gamma_t)\rm{diag} (S_{1.t-1}) + \gamma_t u_{t}*u_{t},
\end{align}
where $x*y$ is a component wise operation giving $x*y=\rm{diag}(xy^T)$ for two vectors. 

To choose the step size $\gamma_t$ we must satisfy the Robins Monro rules: 
\begin{align}
& \sum_{t=1}^N \gamma_t=\infty \quad \sum_{t=1}^N \gamma_t^2 <\infty.
\end{align}

We consider the following step recommended by \citet{onlineEM}:
\begin{align}
& \gamma_0 =1 \\
& \gamma_t=\frac{1}{t^{0.6}}\quad \forall  t > 0.
\end{align}

Finally, in post-processing, we use a Polyak-Ruppert halfway averaging to improve the convergence as recommended by \citet{onlineEM}:
\begin{align}
& \forall t > N/2 \quad st \quad \tilde{t}=t-N/2 >0 \quad do: \nonumber \\
&\bar{W}_t=\frac{\tilde{t}-1}{\tilde{t}} W_{t-1}+ \frac{1}{\tilde{t}} W_t \\
&\bar{\Psi}_t=\frac{\tilde{t}-1}{\tilde{t}} \Psi_{t-1}+ \frac{1}{\tilde{t}} \Psi_t.
\end{align}

\section{The outer product approximation in the general case.} 
\label{ProofLemmaGGN}
We show in this Section the relation between the generalized Gauss-Newton approximation and the Fisher matrix. The proof proposed here comes from \citet{ollivier2018}[Appendix $A$]:
 \begin{proof}
Let’s $p$ be an exponential family such that:
\begin{align}
&p(y|\theta)=m(y_t)\exp({\eta}^TT(y)-A( {\eta})),
\end{align} 
where $T(y)$ is the sufficient statistics, $A$ is the log partition function which satisfies $\nabla^2_\eta  A( {\eta})=Cov(y|\theta)$ and $\nabla_\eta  A( {\eta})=m=\mathbb{E}[(y|\theta)]$ and finally $\eta$ is the natural parameter which depends on $\theta$ through a function $h$, ie $\eta=h(\theta)$. Using twice the chain rules, the second derivative of the negative likelihood of $p$ writes  :
\begin{align}
-\frac{\partial^2 \ln p(y|\theta) }{\partial \theta^2}&= -\frac{\partial h}{\partial \theta} \frac{\partial^2 \ln p(y|\theta)}{\partial h^2} \frac{\partial h}{\partial \theta}^T - \frac{\partial^2 h}{\partial \theta^2} \frac{\partial \ln p(y|\theta)}{\partial h} \\
&= \frac{\partial h}{\partial \theta} Cov(y|\theta) \frac{\partial h}{\partial \theta}^T - \frac{\partial^2 h}{\partial \theta^2}(T(y)-m).
\end{align}
Taking the expectation under $y$ on both sides, we obtain directly the relation:

\begin{align}
\mathbb{E}_y[-\frac{\partial^2 \ln p(y/\eta) }{\partial \theta^2}]=\mathbb{F}(\theta)=\frac{\partial \eta}{\partial \theta} Cov(y|\theta) \frac{\partial \eta}{\partial \theta}^T.
\end{align}

And finally:
\begin{align}
\mathbb{E}_\theta[\mathbb{F}(\theta)]=\mathbb{E}_\theta[\frac{\partial h}{\partial \theta} \mathrm{Cov}(y|\theta) \frac{\partial h}{\partial \theta}^T].
\end{align}

Now for a generalized linear model such that $h(\theta)=x_t^T\theta$ the GGN approximation is exact since $\frac{\partial^2 h}{\partial \theta^2}=0$, this completes the proof.
\end{proof}

\section{Mirror prox}  
\label{Extragrad}
In this section, we show that the iterated scheme defined equation \eqref{mirrorProx} is equivalent to a Mirror prox update \citep{MirrorProx}. We recall first the connection between the recursive variational scheme and the Mirror descent using the results initially derived by \citet{khan2018} for the batch variational approach and extended by \citet{RVGA} for the recursive variational approach. 

Considering an exponential family $q_\eta$ of natural parameter $\eta$, mean parameter $m$ and a strictly convex log partition function $F$ such that $q_\eta(\theta)=h(\theta)\exp(<\eta,\theta>-F( {\eta}))$, the recursive  variational approximation problem between a target distribution  $q_\eta$ and the one-sample posterior $p(\theta|y_t) \propto p(y_t|\theta)q_{\eta_{t-1}}(\theta)$ writes:
\begin{align}
&\underset {\eta_t}{\arg \min} \quad KL(q_{\eta_{t}}(\theta)|p(\theta|y_t)) \\ \label{KLdiv}
=& \underset {\eta_t}{\arg \min} \quad  \mathbb{E}_{q_{\eta_{t}}}[-\log p(y_t|\theta)] + B_F(\eta_{t-1},\eta_{t}),
\end{align}
where $B_F$ is the Bregman divergence associated with the strictly convex log partition function $F$. The critical point must satisfy: 
\begin{align}
&\nabla_{\eta_t} \mathbb{E}_{q_{\eta_{t}}}[-\log p(y_t|\theta)]+(\eta_{t}-\eta_{t-1})\nabla^2 F(\eta_t) =0,
\end{align}
which gives the following implicit fixed point equation on the natural parameter:
\begin{align}
\eta_{t} &=\eta_{t-1} +(\nabla^2 F(\eta_t))^{-1} \nabla_{\eta} \mathbb{E}_{q_{\eta}}[\log p(y_t|\theta)] (\eta_t)\label{RVEA} \\
&=\eta_{t-1} + \nabla_{m} \mathbb{E}_{q_{m}}[\log p(y_t|\theta)](m_t).
\end{align}
If we consider the function $f(m)=\mathbb{E}_{q_m}[-\log p(y_t|\theta)]$ and use the relation $\eta=\nabla F^*(m)$, this update can be interpreted as an implicit version of a Mirror descent on $f$ with step size one: 
\begin{align}
&\nabla F^*(m_t)=\nabla F^*(m_{t-1})- \nabla_m f (m_{t}) \text{ (dual descent with step $1$) } \\
&m_t=\nabla F(\eta_t). \text{ ( projection) }
\end{align}
This implicit scheme can be approximated using the Mirror prox algorithm \citep{MirrorProx} with a step size one:
\begin{align*}
 &\textbf{Mirror prox} \\
& \hat{\eta}_t=\nabla F^*(\hat{m}_t)=\nabla F^*(m_{t-1})- \nabla_m f (m_{t-1}) \\
& \hat{m}_t=\nabla F(\hat{\eta}_t) \\
& \eta_t=\nabla {F^*}(m_t)=\nabla {F^*}(m_{t-1})- \nabla_m f (\hat{m}_t) \\
& m_t=\nabla F(\eta_t). 
\end{align*}

In fact, we have rather computed here a stochastic version of Mirror prox. The stochastic version of Mirror prox may inherit the good convergence properties of the original batch version if the function $f$ is convex and the step is adaptive \citep{MirrorProxStoch}. Here the setting is different: our function $f$ is not jointly convex in $\mu$ and $P$ and we consider a constant step of size one. However, we have found the stochastic version behaves empirically well. 

In the case where $q$ is a multivariate Gaussian distribution, the mean and natural parameters are given by $\eta=\nabla F^*(m)=\begin{pmatrix} \eta_1=P^{-1}\mu  \\ \eta_2=-\frac{1}{2}P^{-1}  \end{pmatrix}$ and $m=\nabla F(\eta)=\begin{pmatrix} m_1=\mu \\ m_2=P+\mu \mu^T \end{pmatrix}$.

The gradient with respect to the mean parameters $m_1,m_2$ can be expressed as the gradient with respect to the source parameters $\mu,P$ using the chain rule:
\begin{align}
&\frac{\partial f}{\partial m_1}=\frac{\partial f}{\partial \mu}\frac{\partial \mu}{\partial m_1}+\frac{\partial f}{\partial P}\frac{\partial P}{\partial m_1}=\frac{\partial f}{\partial \mu}-2\frac{\partial f}{\partial P}\mu\\
&\frac{\partial f}{\partial m_2}=\frac{\partial f}{\partial \mu}\frac{\partial \mu}{\partial m_2}+\frac{\partial f}{\partial P}\frac{\partial P}{\partial m_2}=\frac{\partial f}{\partial P}
\end{align}
A step of the Mirror prox update:
\begin{align}
 \nabla F^*(m_t)&=\nabla F^*(m_{t-1})- \nabla_m f (m_{t-1}) \\
\iff \eta_t&=\eta_{t-1} - \nabla_m f (m_{t-1}),
\end{align}
become, if we write $f(m(\mu,P))=\mathbb{E}_{\theta \sim \mathcal{N}(\mu,P)}[\log p(y_t|\theta)]$:

\begin{align}
&\begin{pmatrix}P_t^{-1}\mu_t \\ -\frac{1}{2}P_t^{-1} \end{pmatrix}=\begin{pmatrix} P_{t-1}^{-1}\mu_{t-1} \\ -\frac{1}{2}P_{t-1}^{-1}  \end{pmatrix} + \begin{pmatrix} \frac{\partial f}{\partial \mu} \lvert_{\substack{\mu_{t-1},P_{t-1}}} -2\frac{\partial f}{\partial P} \lvert_{\substack{\mu_{t-1},P_{t-1}}} \mu_{t-1}\\ \frac{\partial f}{\partial P} \lvert_{\substack{\mu_{t-1},P_{t-1}}} \end{pmatrix} \\
\iff &\begin{pmatrix}P_t^{-1}\mu_t \\ -\frac{1}{2}P_t^{-1} \end{pmatrix}=\begin{pmatrix} (P_{t-1}^{-1}-2\frac{\partial f}{\partial P} \lvert_{\substack{\mu_{t-1},P_{t-1}}}  \mu_{t-1} \\ -\frac{1}{2}P_{t-1}^{-1}  \end{pmatrix} + \begin{pmatrix} \frac{\partial f}{\partial \mu}  \lvert_{\substack{\mu_{t-1},P_{t-1}}}  \\ \frac{\partial f}{\partial P}  \lvert_{\substack{\mu_{t-1},P_{t-1}}}  \end{pmatrix} \\
\iff &\begin{pmatrix}P_t^{-1}\mu_t \\ -\frac{1}{2}P_t^{-1} \end{pmatrix}=\begin{pmatrix} P_t^{-1} \mu_{t-1} \\ -\frac{1}{2}P_{t-1}^{-1}  \end{pmatrix} + \begin{pmatrix} \frac{\partial f}{\partial \mu}  \lvert_{\substack{\mu_{t-1},P_{t-1}}}  \\ \frac{\partial f}{\partial P}  \lvert_{\substack{\mu_{t-1},P_{t-1}}}  \end{pmatrix} \\
\iff &\begin{pmatrix}P_t^{-1}\mu_t \\ -\frac{1}{2}P_t^{-1} \end{pmatrix}=\begin{pmatrix} P_t^{-1} \mu_{t-1} \\ -\frac{1}{2}P_{t-1}^{-1}  \end{pmatrix} + \begin{pmatrix} -\mathbb{E}_{\theta \sim \mathcal{N}(\mu_{t-1},P_{t-1})}[\nabla_\theta \log p(y_t|\theta)] \\ \frac{1}{2} \mathbb{E}_{\theta \sim \mathcal{N}(\mu_{t-1},P_{t-1})}[\nabla^2_\theta \log p(y_t|\theta)] \end{pmatrix}. \label{BonnetPrice}
\end{align}

The last derivation \eqref{BonnetPrice}  comes from the Bonnet \& Price formulas \citep{lin2019b}:  

\begin{align}
& \nabla_\mu \mathcal{N}(\theta|\mu,P)=-\nabla_\theta \mathcal{N}(\theta|\mu,P) \\
& \nabla_P\mathcal{N}(\theta|\mu,P)=\frac{1}{2}\nabla_\theta^2 \mathcal{N}(\theta|\mu,P).
\end{align}

Rearranging terms and applying two times the update as in Mirror prox gives the iterated scheme defined in equation \eqref{mirrorProx}:
 \begin{align}
&\mathbf{\hat{P}_{t}}^{-1}=P_{t-1}^{-1} - \mathbb{E}_{\theta \sim \mathcal{N}(\mu_{t-1},P_{t-1})}[\nabla^2_{\theta} \log p(y_t|\theta)] \label{MP-1} \\
&  \mathbf{\hat{\mu}_{t}} =\mu_{t-1}+ \mathbf{\hat{P}_{t}} \mathbb{E}_{\theta \sim \mathcal{N}(\mu_{t-1},P_{t-1})}[\nabla_{\theta} \log p(y_t|\theta)] \label{MP-2} \\
&\mathbf{P_{t}}^{-1}=P_{t-1}^{-1} - \mathbb{E}_{\theta \sim \mathcal{N}( \mathbf{\hat{\mu}_t}, \mathbf{\hat{P}_{t}})}[\nabla^2_{\theta} \log p(y_t|\theta)] \label{MP-3} \\
&\mu_{t}=\mu_{t-1} + \mathbf{P_{t}} \mathbb{E}_{\theta \sim \mathcal{N}( \mathbf{\hat{\mu}_t}, \mathbf{\hat{P}_{t}} )}[\nabla_{\theta} \log p(y_t|\theta)]. \label{MP-4} 
\end{align}

If the expectations are replaced with a linearization around the last estimated, these updates are also equivalent to the extended iterated Kalman filter scheme \citep{jazwinski1970}. 

Applying the mirror-prox scheme to our logistic regression problem \ref{LogReg} without factor analysis, we see that the Gaussian well approximates the logistic posterior in figure \ref{figureXP_ExtraGrad}. However, when we combine mirror-prox with factor analysis,  the extra covariance update can make the mirror-prox  scheme unstable. We have observed it is then  preferable   to skip the extra covariance update \ref{MP-3}, i.e. using:
 \begin{align}
&\mathbf{\hat{P}_{t}}^{-1}=P_{t-1}^{-1} - \mathbb{E}_{\theta \sim \mathcal{N}(\mu_{t-1},P_{t-1})}[\nabla^2_{\theta} \log p(y_t|\theta)] \\
&  \mathbf{\hat{\mu}_{t}} =\mu_{t-1}+ \mathbf{\hat{P}_{t}} \mathbb{E}_{\theta \sim \mathcal{N}(\mu_{t-1},P_{t-1})}[\nabla_{\theta} \log p(y_t|\theta)] \\
&\mu_{t}=\mu_{t-1}+ \mathbf{\hat{P}_{t}} \mathbb{E}_{\theta \sim \mathcal{N}( \mathbf{\hat{\mu}_t}, \mathbf{\hat{P}_{t}} )}[\nabla_{\theta} \log p(y_t|\theta)]. 
\end{align}
 
\begin{figure}[!ht]
\includegraphics[scale=1]{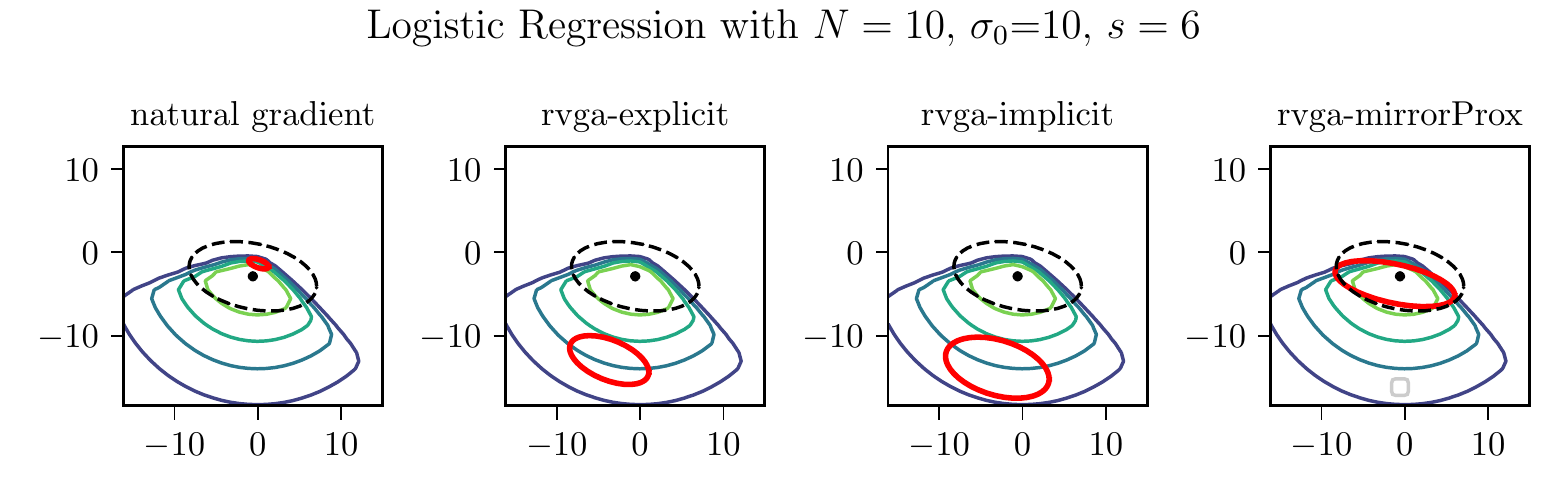}
\includegraphics[scale=1]{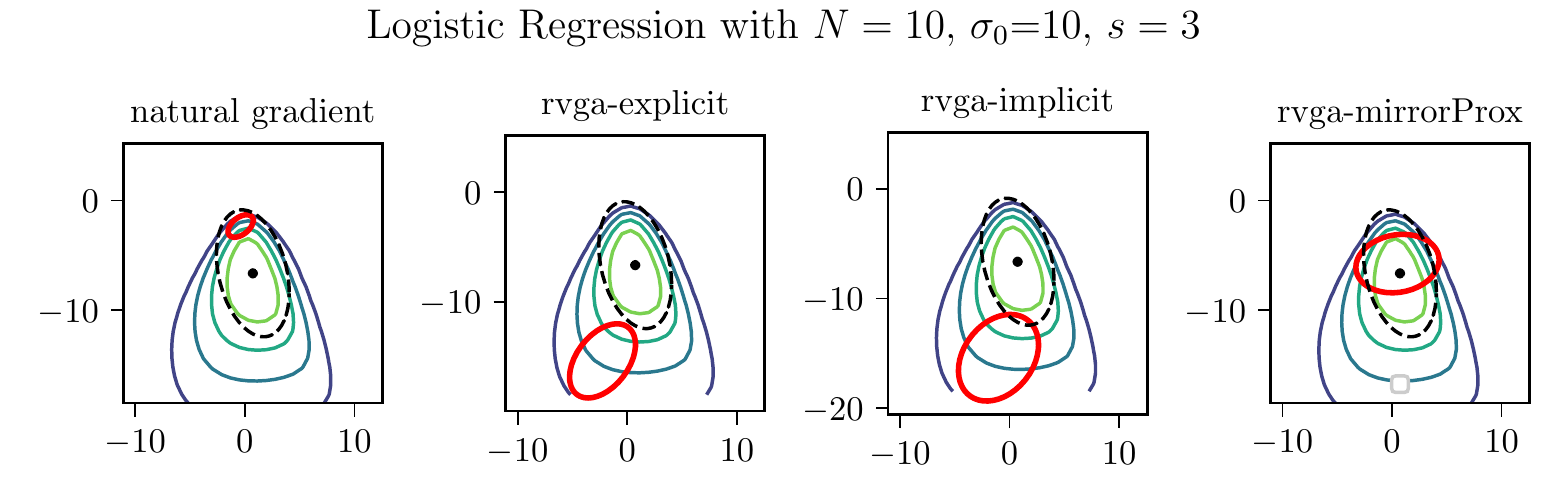}
\caption{Gaussian approximation of the Bayesian logistic posterior with a sharp prior $\sigma_0=10$ for different algorithms. The confidence ellipsoids of the Gaussians at the final time are shown in red. The contour lines of the true posterior are displayed in green. The batch Laplace ellipsoid is shown in a dashed line. We compare the mirror-prox updates \ref{MP-1}-\ref{MP-4}  (right column) with other variants of the updates: explicit using only \ref{MP-1}-\ref{MP-2} (second column),  implicit using \ref{updateMu}-\ref{updateP}  (third column). The natural gradient (left column) corresponds to a variant where the models are linearized. The mirror-prox schemes clearly better approximate the Bayesian logistic posterior.}
\label{figureXP_ExtraGrad}
\end{figure}

\end{document}